\newcommand{\fg}{\mathfrak{g}}
\newcommand{\fG}{\mathfrak{G}}
\newcommand{\fL}{\mathfrak{L}}
\theoremstyle{plain}
\newtheorem{lemma}{Lemma}
\newtheorem{proposition}[lemma]{Proposition}
\newtheorem{corollary}[lemma]{Corollary}
\theoremstyle{definition}
\newtheorem{definition}[lemma]{Definition}
\newtheorem{example}[lemma]{Example}
\begin{document}
\renewcommand{\theequation}{\arabic{section}.\arabic{equation}}
\title{\bf $k$-Leibniz algebras from lower order ones:
from Lie triple to Lie $\ell$-ple systems \vskip 1cm}

\author{J. A. de Azc\'{a}rraga$^*$, \\
Dept. Theor. Phys. and IFIC (CSIC-UVEG), \\
Univ. of Valencia, 46100-Burjassot (Valencia), Spain\\
J. M. Izquierdo$^\dagger$,\\
Dept. Theor. Phys., Univ. of Valladolid, \\
47011-Valladolid, Spain    }

{\date{\small{April 2, 2013}}}

\maketitle
\vskip 1cm
\begin{abstract}
Two types of higher order Lie $\ell$-ple systems are introduced in
this paper. They are defined by brackets with $\ell > 3$ arguments
satisfying certain conditions, and generalize the well known Lie triple
systems. One of the generalizations uses a construction that
allows us to associate a $(2n-3)$-Leibniz algebra $\fL$
with a metric $n$-Leibniz algebra $\tilde{\fL}$ by using a $2(n-1)$-linear
Kasymov trace form for $\tilde{\fL}$. Some specific types of $k$-Leibniz
algebras, relevant in the construction, are introduced as well.
Both higher order Lie $\ell$-ple generalizations
reduce to the standard Lie triple systems for $\ell=3$.
\end{abstract}

\vskip 3cm
{\small{${}^*$j.a.de.azcarraga@ific.uv.es , ${}^\dagger$izquierd@fta.uva.es .}}
\newpage

\section{Introduction }
\label{sec:intro}

 It is natural to generalize Lie algebras and
 Poisson structures when their defining brackets have
more than two  arguments. The quest for ($n>2$)-ary algebras of
various types has arisen repeatedly both in mathematics and physics,
there motivated by their possible physical applications; this
was the case of Nambu mechanics \cite{Nambu:73} introduced long ago,
when the quark statistics was being discussed.

A possible higher order Lie algebra structure is
provided by the {\it generalized} or {higher order Lie algebras}
(GLAs) $\mathcal{G}$ \cite{AzPePB:96a,AzBu:96} \cite{Han-Wac:95, Gne:95},
based on an antisymmetric multibracket with even $n$ entries.
The characteristic relation for  $\mathcal{G}$ is the
{\it generalized Jacobi identity} (GJI), which expresses that the total
antisymmetrization of two nested multibrackets vanishes. The GLAs
$\mathcal{G}$ and the GJI reduce to the ordinary Lie algebras $\fg$
and the JI, respectively, when $n=2$. Like Lie algebras, GLAs
have their higher order Poisson counterparts or
{\it generalized Poisson structures} (GPS) \cite{AzPePB:96a}.

Another generalization is provided by the {\it $n$-Lie}
or {\it Filippov algebras}\footnote{There is some unfortunate confusion concerning
the terminology. We call these algebras $n$-Lie (Filippov's
original name) or Filippov algebras indistinctly but {\it not} `Lie $n$-algebras'
(as sometimes the very different FAs and GLAs are {\it both} referred to).
See Sec.~1.1 of \cite{review} for the terminology of the various
$n$-ary algebras. Intermediate generalizations between FAs and
GLAs also exist; see \cite{Vin2:98}. } (FAs) $\fG$ \cite{Filippov,Kas:87,Ling:93}
and, further, by the $n$-Leibniz algebras $\fL$ \cite{Cas-Lod-Pir:02}, which
will be of our concern here.
FAs have a fully skewsymmetric bracket with an arbitrary number
$n\geq 2$ entries, and their characteristic identity is the
{\it Filippov identity} (FI), which expresses that the adjoint action
is a derivation of $\fG$; again, $\fG=\fg$ for $n=2$.
The FAs Poisson counterparts are the Nambu-Poisson
structures (NPS) \cite{Tak:93}, their precedent being $n$=3
Nambu mechanics \cite{Nambu:73}. The NPS also
satisfy the FI (called `five-point
identity' for $n=3$ in \cite{Sah-Val:92} and later `fundamental identity'
in \cite{Tak:93}). The problem of the quantization of
both $n$-ary Poisson structures, GPS and NPS, has been
the subject of intense study and discussion (see
\cite{Sah-Val:92, Tak:93,AIP-B:97,Cu-Za:02} and
the first ref.~in \cite{review} for an outlook and
further references).

Filippov and related algebras have attracted
considerable attention in the last few years
due to their appearance in three-dimensional
superconformal Chern-Simons theories. This
is the case of the metric 3-Lie FAs in the original
$\mathcal{N}=8$ manifestly supersymetric
Bagger-Lambert-Gustavsson \cite{Ba-La:06,Gust:08a} (BLG) model describing
(actually two) coincident membranes, although a $\mathcal{N}=6$
ABJM \cite{Aha-Be-Ja-Mal:08} superconformal Chern-Simons
theory not requiring 3-FAs was soon pointed out.
The key ingredient for the appearance of three-algebras in
BLG models is the close connection between 3-Lie (or 3-Leibniz)
algebras and their associated Lie (gauge) ones, the properties of
which are encoded in the former (eq.~\eqref{assoc} and
Sec.~\ref{subLeib}). The positivity restrictions of the first
$\mathcal{N}=8$ BLG model, that limit its 3-Lie algebra to be
\cite{Pap:08,Ga-Gu:08} a sum of copies of the $A_4$ FA plus central ideals,
led to consider other possibilities (see \cite{Ba-La-Mu-Pa:12}
for a detailed review of developments in the theory of multiple,
parallel membranes in M-theory). The new algebras were
obtained by relaxing the positive definiteness
of the metric \cite{Go-Mi-Ru:08} or the complete antisymmetry of
the FA 3-bracket. Structures of this last type were already known in
the mathematical literature as $n$-Leibniz algebras
\cite{Cas-Lod-Pir:02}; they also satisfy the Filippov identity,
which for an $\fL$ may be called $n$-Leibniz identity
\cite{Cas-Lod-Pir:02}. Relaxing the anticommutativity
led, in particular, to the `real generalized metric
3-algebras' of Cherkis and S\"amann (CS) \cite{Cher-Sa:08}. These
are metric $3$-Leibniz algebras with a `symmetry property'
(Sec.~\ref{subLeib}) and give rise to an $\mathcal{N}=2$ BLG-type
action. Similarly, in the complex case the Bagger and Lambert hermitian
three-algebras \cite{Ba-La:08} are used in their $\mathcal{N}=6$ theory,
which incorporates the ABJM one \cite{Aha-Be-Ja-Mal:08} in a three-algebra approach.

A class of 3-Leibniz algebras is provided by the {\it Lie triple systems}
(Sec.~\ref{triple}), introduced in mathematics
long ago \cite{Jacob:49,Lis:52,Yama:57,Fa:73} (see further
\cite{Ber:00}). They may be defined (Sec.~\ref{triple}) as
$3$-Leibniz algebras with brackets that are antisymmetric in the
first two arguments that, besides the FI, satisfy an additional
cyclic property (there are also Leibniz triple systems \cite{Bre-San:11}
of which Lie triple systems are a particular case, but
these will not be considered here). The aim of this paper is to generalize
Lie triple systems, which have brackets with $\ell$=3 entries,
to higher $\ell\geq 3$; the resulting structure will define
the {\it Lie $\ell$-ple systems}. As a preliminary
example, we recall the connection
between the simple $n=3$ Filippov algebra $A_4$, the
Lie algebra Lie$\,A_4=so(4)$ and its associated triple
system. We will then show that it is possible to obtain an
$\ell$-Leibniz algebra $\fL$ from two Leibniz algebras
$\fL^1$ and $\fL^2$ satisfying certain conditions.
The analysis will lead us to two possible higher order
generalizations of the Lie triple systems, the  Lie $n$-ple and
the Lie $\ell$-ple systems with $\ell=2n-3, n > 3$,
defined by special types of $k$-Leibniz algebras.
For $n$=3=$\ell$, both Lie 3-ple systems coincide
and reproduce the standard Lie triple ones.

The plan of the paper is the following. Sec.~\ref{Fili-Leib}
summarizes the properties of Filippov $\fG$ and $n$-Leibniz
$\fL$ algebras needed here; Sec.~\ref{gen-metr} extends the CS
3-algebras \cite{Cher-Sa:08} to $k=2n-3$. Sec.~\ref{triple}
relates Lie triple systems to a specific type of 3-Leibniz algebras.
Sec.~\ref{mixedcase} provides a method to obtain a metric
$k$-Leibniz algebra, $k=n+m-3$, from two $n$- and $m$-Leibniz
algebras with certain requirements, and considers some
particular cases of later interest. Sec.~\ref{kple}
introduces the two $k$-ple generalizations of the Lie triple
systems and Sec.~\ref{f.r.} contains an outlook.

Only finite dimensional real algebras are considered in this paper.

\section{Filippov $\fG$ and $n$-Leibniz algebras $\fL$}
\label{Fili-Leib}
\subsection{Filippov or $n$-Lie algebras}
\label{subFil}

\begin{definition}
\label{FAdef}
A {\it Filippov} or {\it $n$-Lie algebra} \cite{Filippov} $\fG$ is a vector
space (also denoted by $\fG$) endowed with an antisymmetric
$n$-linear bracket $\fG\times \mathop{\cdots}\limits^n \times \fG \rightarrow \fG$,
$ [X_1,\dots,X_n] \,$,  that obeys the {\it Filippov identity} (FI),
\begin{eqnarray}
\label{FI}
     & &\left[ X_1,\dots, X_{n-1} , [Y_1,\dots , Y_n]\right]
     =\nonumber\\
     & & \quad
     \sum^n_{r=1} \left[ Y_1,\dots, Y_{r-1} , [X_1,\dots, X_{n-1},
     Y_r] , Y_{r+1},\dots , Y_n\right]\, , \, X,Y\in \fG  .
\end{eqnarray}
\end{definition}
\noindent
\\
It is also possible to define a right FI using a right action {\it i.e.},
starting from $[[Y_1,\dots, Y_n ], X_1,\dots , X_{n-1}]$
at the $l.h.s.$ of eq.~\eqref{FI} but, due to the skewsymmetry, the
right and left FIs coincide. Given a basis $\{\mathbf{e}_a\}$
in $\fG$, the $n$-bracket
\begin{equation}
\label{structure}
 [\mathbf{e}_{a_1},\dots,\mathbf{e}_{a_n}]= f_{a_1\dots a_n}{}^b\,
 \mathbf{e}_b \quad ,\; a=1,\dots, \textrm{dim}\,\fG \; ,
\end{equation}
gives  the {\it structure constants} $f_{a_1\dots a_n}{}^b$
of $\fG$, in terms of which the FI reads
\begin{equation}
\label{FIstrconst}
 f_{b_1 \dots b_n}{}^l \; f_{a_1 \dots a_{n-1}l}{\ }^s \,=
 \sum_{k=1}^{n}\,f_{a_1 \dots a_{n-1}b_k}{}^l \; f_{b_1 \dots b_{k-1} l b_{k+1}\dots
 b_n}{}^s \quad ,
\end{equation}

Defining the adjoint action $ad_{\mathscr{X}}\in\mathrm{End}\fG$ of
$\mathscr{X}=(X_1,\dots,X_{n-1})\in \wedge^{n-1}\fG$ by
$ad_{\mathscr{X}} X \equiv \mathscr{X}\cdot X:=
[X_1,\dots,X_{n-1},X]$,
the FI may be rewritten as
\begin{equation}
\label{adjointFI}
  ad_{\mathscr{X}}[Y_1,\dots,Y_n] = \sum^n_{r=1}
  \left[ Y_1,\dots, Y_{r-1} , ad_{\mathscr{X}}
     Y_r , Y_{r+1},\dots , Y_n\right] \; ,
\end{equation}
which expresses that $ad_{\mathscr{X}}$ is an
{\it (inner) derivation} of $\fG$. Clearly,
\begin{equation}
\label{adbasis}
ad_{\mathscr{X}_{a_1 ... a_{n-1}}}\equiv
[\mathbf{e}_{a_1},...,\mathbf{e}_{a_{n-1}},\,\cdot\,]\quad,\quad
ad_{\mathscr{X}_{a_1 ... a_{n-1}}}\, \mathbf{e}_{a_n} = f_{a_1\dots a_n}{}^b\, \mathbf{e}_b \quad .
\end{equation}
A linear transformation $\rho_{\mathscr{X}}$ of a vector space $V$
satisfying the analogue of \eqref{adjointFI} is called
 \cite{Kas:87} a representation $\rho$ of $\fG$ (although
it is really of $\mathscr{X}\in \wedge^{n-1}\fG$); we
shall only consider here $\rho=ad$ (the regular representation).

The $\{ad_\mathscr{X}\}$ are closed under the commutator and
generate the {\it Lie algebra $\mathrm{Lie}\, \fG$
associated with the FA $\fG$}. Indeed, the FI determines the
End$\,\fG$ relation
\begin{equation}
\label{assoc}
 [ad_{\mathscr{X}}, ad_{\mathscr{Y}}] =
 ad_{\mathscr{X}\cdot\mathscr{Y}} \;
 (=- ad_{\mathscr{Y}\cdot\mathscr{X}})
\end{equation}
where, with $\mathscr{Y}=(Y_1,\dots,Y_{n-1})\in \wedge^{n-1}\fG$,
$\mathscr{X}\cdot\mathscr{Y}\in \wedge^{n-1}\fG$ is given by
\begin{equation}
\label{fundam}
      \mathscr{X}\cdot\mathscr{Y}:= \sum_{r=1}^{n-1}
      (Y_1,\dots, Y_{r-1} , ad_{\mathscr{X}}
     Y_r , Y_{r+1},\dots , Y_{n-1}) \quad (\not= -\mathscr{Y}\cdot\mathscr{X}) \quad .
\end{equation}
Eq.~\eqref{fundam} defines the composition law \cite{Gau:96} of
$\mathscr{X}$ and $\mathscr{Y}$ (see also \cite{Filippov,Kas:87}).
It is non-associative since
$\mathscr{X}\cdot(\mathscr{Y}\cdot \mathscr{Z}) -
(\mathscr{X}\cdot \mathscr{Y})\cdot \mathscr{Z} =
\mathscr{Y}\cdot(\mathscr{X}\cdot \mathscr{Z})$, which
follows from the FI. For $\fg$,
this is the JI, $[X,[Y,Z]]-[[X,Y],Z]= [Y,[X,Z]]$
but, since  $\mathscr{X}\cdot\mathscr{Y} \not
= - \mathscr{Y}\cdot\mathscr{X}$ ($cf.$ \eqref{assoc}), the $\mathscr{X}$'s
generate a Leibniz (Sec.~\ref{subLeib}) rather than a Lie algebra.
Due to the importance of the $\mathscr{X}\in \wedge^{n-1}\fG$,
we refer to them as {\it fundamental objects}; in BLG models,
Lie$\,\fG$ is relevant for the gauge transformations.
For $n=2$, Lie$\,\fg\equiv ad\fg$ is generated by
$\{ad_{X_a}\}$ and, obviously, eq.~\eqref{assoc} is
$[ad_X,ad_Y]=ad_{[X,Y]}$; Lie$\,\fg=\fg/Z(\fg)$,
where $Z(\fg) =\{Y|\, ad_X Y=0 \,,\forall X\in\fg\}$
= $\{Y|\, ad_Y X=0 \, \forall X\in\fg \}$ is the centre of $\fg$.

 The study of FAs follows closely (but not
 fully) that of ordinary Lie algebras \cite{Filippov,Kas:87,Ling:93}
(see \cite{review} for further references). For
instance, a FA is simple if $[\fG,\dots,\fG]\not= 0$ and does not
have non-trivial ideals ($I\subset \fG$ is an ideal if
$[\fG,\dots,\fG,Y]\in I\;\forall Y\in I$); semisimple FAs
are direct sums of simple ones. The centre of $\fG$
is the ideal $Z(\fG)=\{Y\in\fG|ad_{\mathscr{X}} Y=0\;\forall \mathscr{X}\}$;
the regular representation is called faithful if $Z(\fG)=0$. Thus,
$ad$ is faithful for the $n$-Lie algebra $\fG/Z(\fG)$ \cite{Kas:87};
if $\fG$ is semisimple, $ad$ is faithful. If it is
further simple, the $ad_{\mathscr{X}}\in\,$Lie$\,\fG$
act irreducibly on $\fG$.

In contrast with the $n=2$ (Lie) case, the only real simple
($n$$>$2)-Lie algebras are $(n+1)$-dimensional \cite{Filippov, Ling:93}
and given by
\begin{equation}
\label{simplefil}
  [\mathbf{e}_{a_1},\dots,\mathbf{e}_{a_n}]= \epsilon_{a_1\dots a_n}{}^b
 \mathbf{e}_b \ ,
\end{equation}
where, in terms of the Levi-Civita symbol, $\epsilon_{a_1\dots a_{n+1}}$,
$\epsilon_{a_1\dots a_n}{}^b = \eta^{ba_{n+1}} \epsilon_{a_1\dots a_n a_{n+1}}$
and $\eta$ is the (euclidean or pseudoeuclidean) metric on the $\fG$ vector space.
The real euclidean simple $n$-Lie algebras are labelled $A_{n+1}$
\cite{Filippov}; the simple lorentzian FAs may be denoted $A_{p+q}$,
with $p+q=n+1$ (these inequivalent real $(n+1)$-dimensional simple
algebras are the same as complex FAs).

There is an analogue of the Cartan criterion that applies to
general $n$-Lie algebras: a FA $\fG$ is semisimple \cite{Kas:95a}
iff the 2($n-1$)-linear {\it Kasymov trace form} $k$ \cite{Kas:87,Kas:95a}
$k:\wedge^{n-1}\fG \times \wedge^{n-1}\fG \rightarrow \mathbb{R}$,
defined for $ad$ as
\begin{equation}
\label{Kasymov}
  k(\mathscr{X},\mathscr{Y}) =k(X_1,\dots,X_{n-1}, Y_1,\dots,Y_{n-1})
  := \textrm{Tr}(ad_{\mathscr{X}} ad_{\mathscr{Y}}) \;,
\end{equation}
is non-degenerate {\it i.e.}, $k(X,\fG,\mathop{\cdots}\limits^{n-2},\fG, \fG,
\mathop{\cdots}\limits^{n-1} ,\fG )=0 \Rightarrow X=0$ (actually,
$k$ may be defined for other representations $\rho$ of $\fG$). For the
regular representation $ad$, $k$ was called \cite{Kas:95a} the Killing
form for $\fG$ (to which $k$ reduces for $n=2$); nevertheless, we
shall refer to $k$ as the Kasymov trace form for $\fG$.
Since $ad_{\mathscr{X}_{a_1 \dots a_{n-1}}}$ is given by the
dim$\fG\times$dim$\fG$ matrix $(ad_{\mathscr{X}_{a_1 \dots a_{n-1}}})_b{}^c$=
$f_{a_1 \dots a_{n-1}b}{}^c$, the coordinates of $k$ are
\begin{equation}
\label{kas-str}
 k_{a_1\dots a_{n-1}b_1\dots b_{n-1}}
 \equiv k(\mathbf{e}_{a_1},\dots , \mathbf{e}_{a_{n-1}},
  \mathbf{e}_{b_1},\dots , \mathbf{e}_{b_{n-1}}) =
  f_{a_1\dots a_{n-1}b}{}^c f_{b_1\dots b_{n-1}c}{}^b \; .
\end{equation}

In general,  $\fG$ is a {\it metric Filippov algebra} when it is
endowed with a non-degenerate bilinear metric $<,>: \fG\times\fG \rightarrow \mathbb{R}$,
$<X,Y> =<Y,X>$, which is Lie$\,\fG$-invariant,
$\mathscr{X}.\eta(Y,Z)\equiv\mathscr{X}.<Y,Z>=0$ {\it i.e.},
\begin{equation}
\label{invariance}
    <ad_{\mathscr{X}} Y_1 , Y_2> + <Y_1, ad_{\mathscr{X}} Y_2> =0 \ , \quad
    \forall Y\in\fG \ ,\ \forall \mathscr{X}\in
    \wedge^{n-1}\fG  \; ;
\end{equation}
we shall refer to \eqref{invariance} as the {\it metricity property}.
If $<\,,\,>$ is euclidean, Lie$\,\fG \subset so(\mathrm{dim}\fG)$;
Lie$\,A_{n+1}=so(n+1)$. For a 3-Lie algebra eq~\eqref{invariance}
simply reads
\begin{equation}
\label{invariance3}
    <[X_1,X_2,Y_1],Y_2> + <Y_1, [X_1,X_2,Y_2]> =0 \;.
\end{equation}
It follows that the structure constants of a metric FA
with all indices down, $f_{a_1\dots a_{n+1}}:=
f_{a_1\dots a_n}{}^b \langle\mathbf{e}_b, \mathbf{e}_{a_{n+1}}\rangle$, are
completely antisymmetric. When $\fG$ is
semisimple it is also metric, Lie$\,\fG$ is semisimple and
the $\fG$ Kasymov trace form may be also
looked at as the non-singular Lie$\,\fG$ Killing metric.

\subsection{$n$-Leibniz algebras}
\label{subLeib}
{\it $n$-Leibniz algebras} \cite{Cas-Lod-Pir:02} $\mathfrak{L}$
result from relaxing the requirement of full skewsymmetry
in Def.~\ref{FAdef}; thus, $n$-FAs are a particular case
of $n$-Leibniz algebras. The FI (or $n$-Leibniz identity
\cite{Cas-Lod-Pir:02} for $\fL$) now depends on whether the
adjoint derivative of the $n$-Leibniz bracket is a left or a right one.
The $n$-Leibniz algebras for which \eqref{FI} is satisfied are
then {\it left} $n$-Leibniz algebras; for definiteness sake,
we shall consider these henceforth.

Since the $n$-Leibniz bracket needs not being anticommutative, the
fundamental objects for $\fL$ are now $\mathscr{X}\in \otimes^{n-1}\fL$.
But, since expressions such as \eqref{assoc} and \eqref{fundam} only
depend on the FI, the composition law \eqref{fundam} for fundamental
objects, $\mathscr{X}\cdot \mathscr{Y}\in \otimes^{n-1}\fL$,
defines again a Leibniz algebra, and there is \cite{Gau:96,Kas:87} still
an associated {\it Lie} algebra Lie$\fL$ (relevant in BLG-type models). Lie$\fL$
is defined \cite{Da-Tak:97} for the quotient space $\otimes^{n-1} \fL / K$,
where $K$ is the kernel of  the adjoint map $ad$,
$K=\{\mathscr{X}\in \otimes^{n-1} \fL\ |\ ad_\mathscr{X}=0\}$
and $ad_\mathscr{X}=0$ obviously means that
$ad_\mathscr{X} Y=[X_1,\dots,X_{n-1},Y]=0 \;\forall Y\in \fL$
(a similar consideration also holds for Lie$\fG$).
An {\it $n$-Leibniz algebra} $\fL$ {\it is metric} when it is endowed
with a Lie$\fL$-invariant scalar product $<\,,\,>$. Then,
condition \eqref{invariance} $\forall\,\mathscr{X} \in \otimes^{n-1}\fL$
is expressed in terms of the structure constants by
$f_{a_1\dots a_{n-1} bc}=- f_{a_1\dots a_{n-1} c b}$.

An example is provided by

\begin{definition}
\label{CSalg}
{\it The `real generalized 3-algebras' (CS 3-algebras)} \cite{Cher-Sa:08}

These are metric (eq.~\eqref{invariance3}) 3-Leibniz
algebras that satisfy the additional `symmetry property'
\begin{equation}
\label{CSsym}
<[X_1,X_2,Y_1],Y_2>=<[Y_1,Y_2,X_1],X_2>  \;,
\end{equation}
which implies $f_{a_1 a_2 b_1 b_2}= f_{b_1 b_2 a_1 a_2}$ for its
structure constants. An obvious particular case
of CS 3-algebras is the simple 3-Lie algebra $A_4$ since
$\langle [\mathbf{e}_{a_1},\mathbf{e}_{a_2},\mathbf{e}_{a_3}],\mathbf{e}_{a_4}\rangle=
\epsilon_{a_1 a_2 a_3 a_4}$ satisfies both eqs.~\eqref{invariance3} and \eqref{CSsym}.
\end{definition}

The symmetry condition \eqref{CSsym}, together with the metricity property
\eqref{invariance3}, implies that the Leibniz 3-bracket of these CS
algebras is antisymmetric in its first two arguments. Similarly, it is
easy to check that when the 3-Leibniz bracket is antisymmetric in
its first two arguments and satisfies the symmetry
property $f_{a_1 a_2 b_1 b_2}= f_{b_1 b_2 a_1 a_2}$ (eq.~\eqref{CSsym})
for a certain metric, the 3-Leibinz algebra is also metric,
$f_{a_1 a_2 b_1 b_2}= -f_{a_1 a_2 b_2 b_1}$
(in which case $f_{a_1 a_2 b_1 b_2}= f_{b_2 b_1 a_2 a_1}$).

\subsection{Higher order Leibniz algebras of CS type}
\label{gen-metr}

The 3-Leibniz algebras of Def.~\ref{CSalg} admit the
following higher order generalization:

\begin{definition}
\label{higherCS}
A `generalized metric $\ell$-algebra' is an
odd euclidean $\ell$-Leibniz algebra defined by an $\ell$-bracket
$[X_1,\dots,X_{n-1},Y_1,\dots,Y_{n-3},Y_{n-2}]$, with $\ell=2n-3$,
$n\geq 3$, that is antisymmetric in the ($n-1$) $X$'s and in ($n-2$)
the $Y$'s and satisfies\\
a) the metricity property \eqref{invariance},
\begin{eqnarray}
\label{ellmetr}
\langle[X_1,\dots,X_{n-1},Y_1,\dots,Y_{n-3},Y_{n-2}], Y_{n-1}\rangle
=\qquad \qquad \qquad \cr
\qquad \qquad \qquad
 - \langle Y_{n-2},[X_1,\dots,X_{n-1},Y_1,\dots,Y_{n-3}, Y_{n-1}] \rangle \; ,
\end{eqnarray}
b) and the symmetry condition
\begin{equation}
\label{defhCS}
<[X_1,\dots,X_{n-1},Y_1,\dots,Y_{n-2}], Y_{n-1}> =
<[Y_1,\dots,Y_{n-1},X_1,\dots,X_{n-2}], X_{n-1}> \; ,
\end{equation}
which reduce to eqs.~\eqref{invariance3}, \eqref{CSsym} for $\ell=3=n$.
\end{definition}
\noindent
The reason of the numbering and the origin of the symmetry property
will become apparent later (eq.~\eqref{def-brack}). Note that, without any
specific assumption and as in the CS $\ell=3$ case above, the symmetry condition \eqref{defhCS}
plus the metricity \eqref{ellmetr} imply ($X_{n-1},X_{n-2}$)-skewsymmetry;
also, ($X_{n-1},X_{n-2}$)-skewsymmetry plus the symmetry condition
\eqref{defhCS} suffice to imply metricity.

   In terms of the structure constants $g_{a_1...a_{n-1}b_1...b_{m-2}}{}^c$
of the $\ell$-Leibniz algebra, the above relations read
\begin{eqnarray}
\label{l-C-S-L}
 g_{a_1...a_{n-1} b_1...b_{n-2} b_{n-1}} &=&
 -g_{a_1...a_{n-1} b_1...b_{n-3} b_{n-1} b_{n-2}} \quad(\mathrm{metricity})\; \\
 g_{a_1...a_{n-1} b_1... b_{n-1}} &=&
 g_{b_1...b_{n-1} a_1...a_{n-1}}  \quad (\mathrm{symm.\, property}) \; .
\end{eqnarray}

\section{Lie triple Systems}
\label{triple}

A particular case of $3$-Leibniz algebras is that of {\it Lie triple
systems} \cite{Jacob:49, Lis:52, Yama:57, Fa:73,Ber:00}. They
have also appeared in physics as {\it e.g.}, in connection with parastatistics
\cite{Okub:94} or the Yang-Baxter equation \cite{Okubo:93,Oku-Kam:96a,Ker:00}.
Further triple (and supertriple) system generalizations may be found
in \cite{Oku-Kam:96b, Oku:03} and references therein.

\begin{definition}
\label{triplesystems}
A {\it Lie triple system} is a (left) $3$-Leibniz
algebra $\mathfrak{L}$ such that its 3-bracket satisfies,
besides the FI,  the conditions
\begin{enumerate}
\item
$[X_1,X_2,Y]=- [X_2,X_1,Y]\quad \forall X_1,X_2,Y\in \mathfrak{L}$
\item
$[X_1,X_2,X_3] + [X_2,X_3,X_1] + [X_3,X_1,X_2] = 0\quad
    \forall X_1,X_2,X_3 \in   \mathfrak{L} $
\end{enumerate}
\noindent
Note that the cyclic property (b), together with (a),
is equivalent to saying that the full antisymmetrization
of the arguments in $[X_1,X_2,X_3]$ vanishes.
\end{definition}

A generic 3-bracket with the property (a) above is a map
$[\;,\,,\,]:\wedge^2 \fL\times \fL \rightarrow \fL$
and hence its symmetry properties correspond to those of
\begin{equation}
\label{decompo3}
 \Yboxdim{12pt}\yng(1,1)\otimes\yng(1) =\yng(1,1,1)\oplus\yng(2,1) \quad .
\end{equation}
Thus, as a $GL(\mathrm{dim}\fL)$-tensor, the irreducible symmetry components
of a 3-bracket antisymmetric in its first two arguments corresponds
to a fully antisymmetric one or to a 3-bracket with the
mixed symmetry of $\yng(2,1)$. When the 3-Leibniz algebra bracket
has the symmetry of $\yng(2,1)$, the cyclic property (b) in
Def.~\ref{triplesystems} is also satisfied and it defines a Lie
triple system. When the $\yng(2,1)$ part is absent, the 3-Leibniz
algebra $\fL$ is actually a FA $\fG$.\\

As mentioned, an euclidean (say) $n$-Leibniz
$\fL$ algebra has an associated {\it Lie} algebra,
Lie$\fL \subset so(\mathrm{dim}\fL)$; thus, the vector space
$\fL$ carries a representation of Lie$\fL$. There is a canonical procedure
\cite{Lis:52,Yama:57,Fa:73,Oku-Kam:96a,Oku-Kam:96b,Oku:03}
that also goes backwards, from a metric $\fg$ to $\fL$.  It starts from
a Lie algebra $\fg$ endowed with a $\fg$-invariant scalar product $(\;,\;)$,
which preserves the euclidean metric $<\,,\,>$ of a vector space $\fL$
($\fg\subset so(\mathrm{dim}\fL)$) on which $\fg$ acts faithfully;
the construction endows $\fL$ with a Leibniz algebra structure.

  Let $\fg=so(4)$ with basis given by the $L_{a_1 a_2}$ that  generate
the rotations in $\fL=\mathbb{R}^4$ vector space. Then, the 3-Leibniz
algebra is defined by
\begin{equation}
\label{gtoL}
(L_{a_1 a_2},L_{b_1 b_2})= <ad_{a_,a_2} \mathbf{e}_{b_1},\mathbf{e}_{b_2}>=
<[\mathbf{e}_{a_1}, \mathbf{e}_{a_2},\mathbf{e}_{b_1}],\mathbf{e}_{b_2}> \; ,
\; a,b=1,\dots, 4 \; .
\end{equation}
Clearly, the symmetry property \eqref{CSsym} is satisfied and
eq.~\eqref{invariance} follows from $L_{b_1 b_2}= -L_{b_2 b_1}$ .
Since the 3-bracket also fulfills the FI (this follows from
Prop.~\ref{n-m-prop} below as a particular case), the construction
leads to CS 3-algebras \cite{Cher-Sa:08},
as shown in \cite{deMed-JMF-Men-Rit:08}, Th.~11
(for the hermitian algebras in \cite{Ba-La:08} see
\cite{deMed-JMF-Men-Rit:08, Palm:09}).

  As a first example of \eqref{gtoL}, let $(\;,\,)$ be the $so(4)$ Killing metric
$\frac{1}{2}\mathrm{Tr}(ad_{a_1 a_2}ad_{a_3 b}) =
\frac{1}{2}\epsilon_{a_1 a_2 b}{}^c \epsilon_{a_3 a_4 c}{}^b$.
This defines the 3-Leibniz algebra
\begin{equation}
\label{cs-so4}
    [\mathbf{e}_{a_1}, \mathbf{e}_{a_2}, \mathbf{e}_{b_1}] =
    -(\delta_{a_1 b_1} \mathbf{e}_{a_2}- \delta_{a_2 b_1} \mathbf{e}_{a_1})=
    -\sum_{\sigma\in S_2}\delta_{a_{\sigma(1)b_1}}\delta_{a_{\sigma(2)}}{}^c \mathbf{e}_c \; .
\end{equation}
Reciprocally, we see that the $ad_{\mathscr{X}_{a_1 a_2}}$ generate
the original $so(4)$ rotations since
\begin{equation}
\label{can-tri}
ad_{\mathscr{X}_{a_1 a_2}} \mathbf{e}_b= [\mathbf{e}_{a_1}, \mathbf{e}_{a_2}, \mathbf{e}_b]
= -(\delta_{a_1 b} \mathbf{e}_{a_2}- \delta_{a_2 b} \mathbf{e}_{a_1})
= L_{a_1a_2} \mathbf{e}_b \; .
\end{equation}
Further, the above 3-bracket does not have a $\yng(1,1,1)$ component. Thus,
it defines a (rather old \cite{Jacob:49}) Lie triple system
which is also an example of the CS 3-algebras
of Def.~\ref{CSalg}. In general, however, Lie triple systems are not metric.

 Let ($\;,\;$) now be the invariant metric
$k^{(2)}(L_{a_1 a_2}, L_{b_1 b_2})=\epsilon_{a_1 a_2 b_1 b_2}$
on $so(4)$, which is symmetric and {\it semi}definite.
Then, eq.~\eqref{gtoL} for $k^{(2)}$ gives
$\epsilon_{a_1 a_2 b_1 b_2} =
\langle [\mathbf{e}_{a_1},\mathbf{e}_{a_2},\mathbf{e}_{b_1}],\mathbf{e}_{b_2}\rangle$,
which corresponds to $\yng(1,1,1)$ and to the $A_4$ 3-Lie algebra,
an obvious example of Def.~\ref{CSalg}. The existence of $k^{(2)}$
is a fortunate accident for the Chern-Simons part of the lagrangian
of the original $A_4$-BLG model: Lie$A_{n+1}=so(n+1)$ is simple
but for $n=3$ and the fully skewsymmetric $k^{(2)}$  does not generalize
to $n>3$ (see the footnote in Sec.~14.4 of \cite{review}). \\

{\it Remark.}$\;$
  The Lie triple system in eq.~\eqref{cs-so4} also follows
from the Kasymov trace form $k$ for the FA $A_4$,
$k:\wedge^2 A_4 \times \wedge^2 A_4 \rightarrow \mathbb{R}$;
since  $A_4$ is simple $k$ is, in fact, the Killing metric for $so(4)$.
Let $<\,,>$ denote the euclidean metric on $\mathbb{R}^4=A_4$. Then,
Kasymov's $k$  ($f_{a_1 a_2 b}{}^c=\epsilon_{a_1 a_2 b}{}^c$ in \eqref{kas-str})
\begin{equation}
\label{cs-so4-eq}
    k_{a_1 a_2 a_3 b} = \frac{1}{2}\mathrm{Tr}(ad_{a_1 a_2}ad_{a_3 b})=
    \frac{1}{2} \epsilon_{a_1 a_2 d}{}^c \epsilon_{a_3 b c}{}^d=
    \langle [\mathbf{e}_{a_1},\mathbf{e}_{a_2},\mathbf{e}_{a_3}],\mathbf{e}_b \rangle
\end{equation}
also reproduces the metric Lie triple system defined by \eqref{cs-so4}.

We shall use the Kasymov form and introduce other `mixed metric'
generalizations below to obtain higher order $k$-Leibniz
algebras from $m$- and $n$-Lie algebras and, later, to introduce
Lie $\ell$-ple systems (Sec.~\ref{kple}).  \\

\section{The $k$-Leibniz algebra associated with two $n$-
and $m$-Leibniz algebras }
\label{mixedcase}

Let $\fL^1$ and $\fL^2$ be, respectively, $n$- and $m$-Leibniz algebras
defined on the same vector space $V$, and let $\fL^2$ be metric with
respect to $\langle \; ,\;\rangle$ so that
$ad^2_{\mathscr{X}}\in so(\mathrm{dim}V)$. Assume now that   \\
\noindent
a) the $m$-bracket of $\fL^2$ satisfies
\begin{equation}
\label{defhCS3}
<[Z_1,\dots,Z_{m-3},X_1,X_2,Y_1]_{\fL^2}, Y_2> =
<[Z_1,\dots ,Z_{m-3},Y_1,Y_2,X_1]_{\fL^2}, X_2> \; ,
\end{equation}
(a condition satisfied by all metric $m$-FAs and that, for $m=3$,
reduces to the symmetry property eq.~\eqref{CSsym}). Further,\\
\noindent
b)
$ad_\mathscr{Y}^2$, $\mathscr{Y}\in \otimes^{m-1}\fL^2$,  is a
derivation of $\fL^1$,
\begin{equation}
\label{adjointFI2}
  ad_{\mathscr{Y}}^2 [X_1,\dots,X_n]_{\fL^1} =
  \sum^n_{r=1} \left[ X_1,\dots, ad_{\mathscr{Y}}^2
     X_r ,\dots X_n\right]_{\fL^1} \; ,
\end{equation}
where $ad_{\mathscr{X}}^2\, X_r\,$ is the $m$-bracket in
$\fL^2$. Then, the following proposition follows:

\begin{proposition}
\label{n-m-prop}
Let $\fL^1$ and $\fL^2$ be as above satisfying
conditions \eqref{defhCS3}, \eqref{adjointFI2}. Let the
generalization of the Kasymov trace form
$k:(\mathscr{X}, \mathscr{Y}) \rightarrow \mathbb{R}$, where now
$\mathscr{X}\in \otimes^{n-1}\fL^1\,,\, \mathscr{Y}\in \otimes^{m-1}\fL^2$,
be defined by $k(\mathscr{X}, \mathscr{Y})=
\mathrm{Tr}(ad^1_\mathscr{X} ad^2_\mathscr{Y})= k(\mathscr{Y}, \mathscr{X})$;
clearly, it reduces to Kasymov's $k$ for $\fL^1 = \fL^2$. Then,
the ($n+m-3$)-bracket defined on $V$ by
\begin{equation}
\label{n-m-prop-eq}
\langle [X_1,\dots,X_{n-1},Y_1,\dots,Y_{m-2}]_{\fL}, Y_{m-1}\rangle =
\mathrm{Tr}(ad_{X_1\dots X_{n-1}}^1 ad_{Y_1 \dots Y_{m-1}}^2 ) \; ,
\end{equation}
satisfies the FI and therefore defines a $k$-Leibniz
algebra $\fL$, $k=n+m-3$. Moreover, this $k$-Leibniz algebra is
metric $w.r.t.$ $\langle \, ,\,\rangle$.
\end{proposition}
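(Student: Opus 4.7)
My plan is to verify three things in succession: well-definedness of the $k$-bracket, the metricity \eqref{ellmetr} of $\fL$, and the Filippov identity.

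\emph{Well-definedness} is immediate: the right-hand side of \eqref{n-m-prop-eq} is multilinear and in particular linear in $Y_{m-1}$, so non-degeneracy of $\langle\,,\,\rangle$ produces a unique vector in $V$ representing the linear functional $Y_{m-1}\mapsto\mathrm{Tr}(ad^1_{\mathscr{X}}\,ad^2_{Y_1,\ldots,Y_{m-1}})$, which we take as $[X_1,\ldots,X_{n-1},Y_1,\ldots,Y_{m-2}]_\fL$.

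For \emph{metricity} I would first establish the operator identity
\begin{equation*}
ad^2_{Z_1,\ldots,Z_{m-3},A,B}=-ad^2_{Z_1,\ldots,Z_{m-3},B,A},
\end{equation*}
and then transfer it to the trace in \eqref{n-m-prop-eq} to obtain the $(Y_{m-2},Y_{m-1})$-antisymmetry required for \eqref{ellmetr}. The operator identity is obtained by chaining hypothesis (a), eq.~\eqref{defhCS3}, with the $\fL^2$-metricity \eqref{invariance}: apply \eqref{defhCS3} once to exchange $(A,B)$ with an outside pair $(v,w)$, use \eqref{invariance} to move an argument across the $\fL^2$-bracket, apply \eqref{defhCS3} a second time to restore the outside pair to its original position, and finally invoke non-degeneracy of $\langle\,,\,\rangle$ to strip off the test vector.

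For the \emph{Filippov identity} I would pair both sides with an arbitrary test vector, use the metricity just proven, and rewrite every occurrence of the $\fL$-bracket in the trace form \eqref{n-m-prop-eq}. The algebraic machinery consists of trace cyclicity together with the two commutator relations
\begin{equation*}
[ad^2_{\mathscr{Y}},ad^1_{\mathscr{X}}]=\sum_{r=1}^{n-1}ad^1_{(X_1,\ldots,ad^2_{\mathscr{Y}}X_r,\ldots,X_{n-1})},\qquad [ad^2_{\mathscr{Y}_1},ad^2_{\mathscr{Y}_2}]=ad^2_{\mathscr{Y}_1\cdot\mathscr{Y}_2},
\end{equation*}
the first of which follows from hypothesis (b) (the derivation property \eqref{adjointFI2}) and the second from the FI of $\fL^2$ via \eqref{assoc}. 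Together with the operator antisymmetry from the metricity step, these commutators should let $ad^\fL_\mathscr{W}$ be pushed through each of the $k$ slots of the outer bracket: the $(n-1)$ contributions from the first commutator and the $(m-2)$ from the second assemble into the $k=n+m-3$ derivation summands needed on the right-hand side of the FI.

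The main obstacle is that the nested $\fL$-bracket $W=[B_1,\ldots,B_k]_\fL$ enters the left-hand side as an argument of $ad^2$, so one cannot invoke a derivation property on $W$ directly. To unpack $W$ I would use \eqref{defhCS3} and trace cyclicity to rotate $W$ into the rightmost (target) slot, reinvoke \eqref{n-m-prop-eq} to replace $W$ by an inner trace in the $B_i$'s, and then redistribute using the commutator identities above. Careful bookkeeping is needed: many intermediate terms should cancel in pairs via the antisymmetry from the metricity step, the survivors organising into the $k$ derivation contributions of the FI.
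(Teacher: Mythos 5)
Your plan is correct and is in substance the paper's own proof: the two commutator identities and the operator antisymmetry you isolate are exactly the paper's relations \eqref{FI-coord}, \eqref{sim-in-coord} and \eqref{metL2} written in operator rather than structure-constant form, and the ``careful bookkeeping'' you defer is precisely the computation carried out in \eqref{LI-subst}--\eqref{sum2}, where the surviving terms cancel after a second use of \eqref{defhCS3}. The difference is purely notational (traces of $ad$ operators versus contracted indices), not conceptual.
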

\begin{proof}

  Let $\{\mathbf{e}_a \}$ be a basis of the common underlying
vector space $V$ of the $\fL^1$ and $\fL^2$ algebras
and let $f_{a_1\cdots a_n}{}^b$ and $h_{b_1\cdots b_m}{}^c$ be,
respectively, their structure constants in that basis,
\begin{equation}
\label{strcons}
   \left[\mathbf{e}_{a_1},\cdots ,\mathbf{e}_{a_n}\right]_{\fL^1} =
   f_{a_1\cdots a_n}{}^b \mathbf{e}_b \quad, \quad
    \left[\mathbf{e}_{b_1},\cdots ,\mathbf{e}_{b_m}\right]_{\fL^2} =
    h_{b_1\cdots b_m}{}^c \mathbf{e}_c \; ,
\end{equation}
where, since $\fL^2$ is metric,
\begin{equation}
\label{metL2}
  h_{b_1\cdots b_{m-1}u_1u_2} = - h_{b_1\cdots b_{m-1}u_2u_1} \; .
\end{equation}
It follows from eq.~\eqref{n-m-prop-eq} that the structure constants
$g_{a_1\cdots a_{n-1} b_1\cdots b_{m-2}}{}^d$
\begin{equation}
\label{Leib-str}
     [\mathbf{e}_{a_1},\cdots , \mathbf{e}_{a_{n-1}},
 \mathbf{e}_{b_1},\cdots , \mathbf{e}_{b_{m-2}}]_{\fL}=
 g_{a_1\cdots a_{n-1}b_1\cdots b_{m-2}}{}^d \mathbf{e}_d
\end{equation}
of the $(n+m-3)$-bracket defining $\fL$ are expressed in terms of those of
$\fL^1$ and $\fL^2$ as
\begin{equation}
\label{gf2}
   g_{a_1\cdots a_{n-1}b_1\cdots b_{m-2}d} =
f_{a_1\cdots a_{n-1}}{}^{uv} h_{b_1\cdots b_{m-2}dvu} \; ,
\end{equation}
where indices are raised and lowered by the metric
$\langle\; ,\;\rangle$ on $V$.

To prove that the $k=(n-m-3)$-bracket \eqref{n-m-prop-eq} defines a
$k$-Leibniz algebra $\fL$ it suffices to check the FI
(eq.~\eqref{FIstrconst}),
\begin{eqnarray}
\label{LI-comp}
 & & g_{a_1\cdots a_{n-1}b_1\cdots b_{m-2}}{}^l g_{c_1\cdots c_{n-1}
d_1\cdots d_{m-3}l}{}^s  \nonumber\\
& & -\sum_{r=1}^{n-1} g_{c_1\cdots c_{n-1}d_1\cdots d_{m-3}
a_r}{}^l g_{a_1\cdots a_{r-1}l a_{r+1}\cdot a_{n-1} b_1\cdots
b_{m-2}}{}^s
\nonumber\\
& & - \sum_{r=1}^{m-2} g_{c_1\cdots c_{n-1}d_1\cdots d_{m-3}
b_r}{}^l g_{a_1\cdots a_{n-1} b_1\cdots b_{r-1} l b_{r+1} \cdots
b_{m-2}}{}^s =0 \; .
\end{eqnarray}
The FI for $\fL^2$ and the derivation property \eqref{adjointFI2}
read, respectively,
\begin{eqnarray}
\label{FI-coord}
   h_{b_1\cdots b_m}{}^l h_{a_1\cdots a_{m-1}l}{}^s &=& \sum^m_{r=1}
h_{a_1\cdots a_{m-1}b_r}{}^l h_{b_1\dots b_{r-1} l b_{r+1} \cdots b_m}{}^s \; ,
\nonumber\\
  f_{a_1\cdots a_n}{}^l h_{b_1\cdots b_{m-1}l}{}^s &=& \sum^n_{r=1}
h_{b_1\cdots b_{m-1}a_r}{}^l f_{a_1\dots a_{r-1} l a_{r+1} \cdots a_n}{}^s \; .
\end{eqnarray}
Using eq.~\eqref{gf2} to express the $g$'s in terms of the
$f$'s and $h$'s in the FI \eqref{LI-comp}
for $\fL$, and property \eqref{defhCS3} of $\fL^2$,
\begin{equation}
\label{sim-in-coord}
      h_{b_1\cdots b_{m-3}u_1u_2v_1v_2}=
       h_{b_1\cdots b_{m-3}v_1v_2u_1u_2}\; ,
\end{equation}
the $l.h.s.$ of \eqref{LI-comp} becomes
\begin{eqnarray}
\label{LI-subst}
 & & f_{a_1\cdots a_{n-1}uv} h_{b_1\cdots b_{m-2}}{}^{lvu}
f_{c_1\cdots c_{n-1}}{}^{wt} h_{d_1\cdots d_{m-3}l}{}^s{}_{tw}
\nonumber\\
& & - f_{c_1\cdots c_{n-1}uv}h_{b_1\cdots b_{m-2}}{}^s{}_{tw}
\sum_{r=1}^{n-1} h_{d_1\cdots d_{m-3}vua_r}{}^{l}
f_{a_1\cdots a_{r-1}la_{r+1}\cdots a_{n-1}}{}^{wt} \nonumber\\
& & - f_{c_1\cdots c_{n-1}uv} f_{a_1\cdots a_{n-1}}{}^{wt}
\sum_{r=1}^{m-2} h_{d_1\cdots d_{m-3}vub_r}{}^{l} h_{b_1\cdots
b_{r-1}lb_{r+1}\cdots b_{m-2}}{}^s{}_{tw}\ .
\end{eqnarray}
Now, using eqs.\eqref{FI-coord}, the sums in \eqref{LI-subst}
can be rewritten as
\begin{eqnarray}
\label{sum1}
  & & - \sum_{r=1}^{n-1} h_{d_1\cdots d_{m-3}vua_r}{}^{l}
f_{a_1\cdots a_{r-1}la_{r+1}\cdots a_{n+1}}{}^{wt} \nonumber\\
 & & = -f_{a_1\cdots a_{n-1}}{}^{wl}
h_{d_1\cdots d_{m-3}vul}{}^t + h_{d_1\cdots
d_{m-3}vu}{}^{wl} f_{a_1\cdots a_{n-1}l}{}^t\ ,
\end{eqnarray}
and
\begin{eqnarray}
\label{sum2}
 & & -\sum_{r=1}^{m-2} h_{d_1\cdots d_{m-3}b_r}{}^{lvu}
h_{b_1\cdots b_{r-1}lb_{r+1}\cdots b_{m-2}}{}^s{}_{tw}
= - h_{b_1\cdots b_{m-2}}{}^s{}_t{}^l
h_{d_1\cdots d_{m-3}}{}^{vu}{}_{lw} \nonumber\\
& & + h_{d_1\cdots d_{m-3}}{}^{vusl}
h_{b_1\cdots b_{m-2}ltw}
+ h_{d_1\cdots d_{m-3}}{}^{vu}{}_t{}^l h_{b_1\cdots
b_{m-2}}{}^s{}_{lw}\ .
\end{eqnarray}
Inserting \eqref{sum1} and \eqref{sum2} into \eqref{LI-subst}
and using again property \eqref{defhCS3},
it is found that \eqref{LI-subst} vanishes. Hence, the FI
\eqref{LI-comp} is satisfied and the $k$-bracket
(eqs.~\eqref{n-m-prop-eq}, \eqref{Leib-str})
defines a $k$-Leibniz algebra $\fL$ associated with $\fL^1$
and $\fL^2$. Further, $\fL$  is metric with respect to $\langle\; ,\; \rangle$,
as can be easily seen from eq. \eqref{gf2} by using the assumed metricity of
$\fL^2$ (eq.~\eqref{metL2}) together with property \eqref{sim-in-coord}.
\end{proof}

We shall use below two particular Corolaries of Prop.~\ref{n-m-prop}.

\begin{corollary}
\label{corcinco}
Let $\tilde{\fL} $ be a metric $n$-Leibniz algebra with an
$n$-bracket skewsymmetric in its first $n-1$ arguments that satisfies
condition \eqref{defhCS3}. Obviously, condition \eqref{adjointFI2} holds.
Then, the $\ell$-bracket with $\ell=2n-3$, defined by
eq.~\eqref{n-m-prop-eq} with $\fL^1=\fL^2=\tilde{\fL}$,
\begin{equation}
\label{def-brack}
\langle [ X_1,\cdots ,X_{n-1}, Y_1,\cdots ,Y_{n-2}]_{\fL}, Y_{n-1}\rangle =
\textrm{Tr}(ad_{(X_1,\cdots ,X_{n-1})} ad_{(Y_1,\cdots ,Y_{n-1})}) \; ,
\end{equation}
where $\langle \; ,\;\rangle$ is the invariant metric on the
common vector space,  defines a metric $\ell$-Leibniz algebra $\fL$
with a bracket that is antisymmetric under both the first $n-1$ and last $n-2$
arguments. Further, it satisfies \eqref{defhCS} by construction and hence
$\fL$ is a generalized metric $\ell$-algebra in the sense of Def.~\ref{higherCS}.
\end{corollary}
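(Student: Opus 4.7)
The plan is to read the statement as a direct application of Proposition \ref{n-m-prop} to the specialisation $\fL^1=\fL^2=\tilde{\fL}$, supplemented by a short extraction of the extra antisymmetries and the symmetry property \eqref{defhCS} from the trace structure in \eqref{def-brack}. With $n=m$ one has $k=n+m-3=2n-3=\ell$, so once Proposition \ref{n-m-prop} applies, the $\ell$-bracket automatically obeys the FI and is metric, i.e.\ it satisfies \eqref{ellmetr}. The only thing to check is that the hypotheses of Proposition \ref{n-m-prop} hold in this coincident setting: condition \eqref{defhCS3} is given by assumption, while condition \eqref{adjointFI2}, which in the mixed situation is a genuinely extra requirement, collapses here into the Filippov identity \eqref{FI} of $\tilde{\fL}$ and is thus built into the very definition of an $n$-Leibniz algebra.

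Next I would read off the two remaining antisymmetries directly from the right-hand side of \eqref{def-brack}. Because the $n$-bracket of $\tilde{\fL}$ is skewsymmetric in its first $n-1$ entries, the endomorphism $ad_{(X_1,\ldots,X_{n-1})}=[X_1,\ldots,X_{n-1},\,\cdot\,]_{\tilde{\fL}}$ is antisymmetric under any permutation of $X_1,\ldots,X_{n-1}$, and the same holds for $ad_{(Y_1,\ldots,Y_{n-1})}$. Consequently $\textrm{Tr}(ad_{\mathscr{X}}\,ad_{\mathscr{Y}})$ is antisymmetric in $X_1,\ldots,X_{n-1}$ and in all of $Y_1,\ldots,Y_{n-1}$. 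Pulling this back through \eqref{def-brack} and using the non-degeneracy of $\langle\,,\,\rangle$, the $\ell$-bracket inherits antisymmetry in its first $n-1$ arguments and in $Y_1,\ldots,Y_{n-2}$; swapping $Y_{n-1}$ with any $Y_i$ ($i\leq n-2$) produces an identity equivalent to the metricity \eqref{ellmetr} already supplied by Proposition \ref{n-m-prop}, providing an internal consistency check.

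Finally, the symmetry property \eqref{defhCS} is immediate from the cyclicity of the trace: the left-hand side of \eqref{defhCS} equals $\textrm{Tr}(ad_{(X_1,\ldots,X_{n-1})}\,ad_{(Y_1,\ldots,Y_{n-1})})$ while the right-hand side equals $\textrm{Tr}(ad_{(Y_1,\ldots,Y_{n-1})}\,ad_{(X_1,\ldots,X_{n-1})})$, and these coincide. Combining these symmetry facts with the metricity inherited from Proposition \ref{n-m-prop} places $\fL$ within Definition \ref{higherCS} and completes the argument. No step is genuinely hard; the only delicate point worth flagging is recognising that \eqref{adjointFI2} becomes the FI of $\tilde{\fL}$ in the coincident case, which is precisely what makes the otherwise restrictive mixed hypothesis of Proposition \ref{n-m-prop} automatic here.
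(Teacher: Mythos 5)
Your proposal is correct and follows essentially the same route as the paper: the corollary is obtained by specialising Proposition \ref{n-m-prop} to $\fL^1=\fL^2=\tilde{\fL}$, with condition \eqref{adjointFI2} reducing to the Filippov identity of $\tilde{\fL}$ (this is exactly what the paper means by ``obviously''), the extra antisymmetries read off from the skewsymmetry of $ad_{\mathscr{X}}$ and $ad_{\mathscr{Y}}$ together with the non-degeneracy of $\langle\,,\,\rangle$, and the symmetry property \eqref{defhCS} following from cyclicity of the trace. You have in fact spelled out the details more explicitly than the paper does.
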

\noindent
Clearly, the conditions in this Corollary are met when $\tilde{\fL}$ is
in particular a metric $n$-Lie algebra. This is the case of

\begin{example}
\label{ht}
Let $\tilde{\fL}=A_{n+1}$ and $<\;,\;>$ euclidean in Cor.~\ref{corcinco}.
Then, eqs.~\eqref{adbasis} and
\eqref{simplefil} give
\begin{equation}
\label{son+1ac}
 ad_{a_1 \dots a_{n-1}} . \mathbf{e}_{a_n} =
 \epsilon_{a_1\dots a_{n-1} a_n}{}^{a_{n+1}} \mathbf{e}_{a_{n+1}} \; , \; a=1,\dots,n+1
\end{equation}
and, by eq.~\eqref{def-brack}, the Kasymov trace form leads to
\begin{eqnarray}
\label{2n-3-brack}
 k_{a_1,\dots,a_{n-1},b_1,\dots,b_{n-1}} &=& \frac{1}{2}\mathrm{Tr}(ad_{a_1,\dots,a_{n-1}}ad_{b_1,\dots,b_{n-1}}) =\cr
 \frac{1}{2}\epsilon_{a_1...a_{n-1} b}{}^c \,\epsilon_{b_1...b_{n-1} c}{}^b  &=&
-\sum_{\sigma\in S_{n-1}} \delta_{a_{\sigma(1)b_1}}\dots \delta_{a_{\sigma(n-2)b_{n-2}}}
\delta_{a_{\sigma(n-1)b_{n-1}}}  \cr
&=&\langle[\mathbf{e}_{a_1},\dots,\mathbf{e}_{a_{n-1}},
\mathbf{e}_{b_1},\dots, \mathbf{e}_{b_{n-2}}] , \mathbf{e}_{b_{n-1}}\rangle \; ,
\end{eqnarray}
which defines a ($2n-3)$-bracket antisymmetric in its first ($n-1$) and
second ($n-2$) indices separately. This ($2n-3$)-Leibniz algebra will
be used to define the Lie $\ell$-ple system in Sec.~\ref{L-l-ple}.

It is sufficient to replace $\delta_{ab}$ by the Minkowskian $\eta_{ab}$
to account for the case of the Lorentzian algebras $A_{p+q}$.
Notice that we may also follow the procedure in Sec.~\ref{triple}
(eq.~\eqref{gtoL}) to obtain the above ($\ell=2n-3$)-Leibniz
algebra $\fL$ from $\fg=so(n+1)$ and its Killing metric.
Characterizing the $so(n+1)$ generators by $n-1$
indices\footnote{The action of $so(n+1)$ on $\mathbb{R}^{n+1}$ is given
({\it cf.}~eq.~\eqref{can-tri}) by eq.~\eqref{son+1ac}. This leads,
using the dual as
$L_{b_1 b_2}= \frac{-1}{(n-1)!} \epsilon_{b_1 b_2}{}^{a_1\dots a_{n-1}} ad_{a_1 \dots a_{n-1}}$,
to the familiar expression
$L_{b_1 b_2}  \mathbf{e}_{a_n}$=$\frac{-1}{(n-1)!} \epsilon_{b_1 b_2}{}^{a_1\dots a_{n-1}}
\epsilon_{a_1\dots a_{n-1} a_n}{}^{a_{n+1}} \mathbf{e}_{a_{n+1}} =
-(\delta_{b_1 a_n} \mathbf{e}_{b_2} - \delta_{b_2 a_n} \mathbf{e}_{b_1}) \; $.  }
($a_1,\dots, a_{n-1}$)
(${n+1\choose n-1}={{n+1}\choose 2}$), the Killing metric
$(\;,\;)$ on $so(n+1)$ leads to the $k$ in \eqref{2n-3-brack} and
the $\ell$-Leibniz algebra defined there.
\end{example}
\noindent

\begin{corollary}
\label{corseis}
Let $\fL^2 $ be a CS $3$-algebra (Def.~\ref{CSalg}). Thus,
condition \eqref{defhCS3} holds. Let $\fL^1$ be an
$n$-Leibniz algebra on the same vector space $V$ endowed
with an $n$-bracket skewsymmetric in its first $n-1$ arguments
and let $ad^2_{(X_1,X_2)}$ be a derivation of $\fL^1$. Then,
the $n$-bracket ($h=n+3-3$)
\begin{equation}
\label{def-brack-new}
\langle [ X_1,\cdots ,X_{n-1}, X_n], Y \rangle =
\textrm{Tr}(ad^1_{(X_1,\cdots ,X_{n-1})} ad^2_{(X_n,Y)})
\end{equation}
is skewsymmetric under the interchange of its first $n-1$ arguments
and defines by Prop.~\ref{n-m-prop} a metric
$n$-Leibniz algebra  $\fL$.
\end{corollary}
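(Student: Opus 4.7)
The plan is to observe that the Corollary is essentially the $m=3$ specialization of Proposition~\ref{n-m-prop}, so the proof reduces to verifying that the three hypotheses of that proposition hold for $\fL^1$ and the given CS 3-algebra $\fL^2$, and then reading off $k=n+3-3=n$.

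I would check the hypotheses in turn. The metricity of $\fL^2$ with respect to $\langle\,,\,\rangle$ is part of the definition of a CS 3-algebra (Def.~\ref{CSalg}), so $ad^2_{\mathscr{X}}\in\fso(\mathrm{dim}\,V)$ as required in Prop.~\ref{n-m-prop}. Condition \eqref{defhCS3} is needed for $\fL^2$; for $m=3$ the prefix $Z_1,\dots,Z_{m-3}$ is empty, so \eqref{defhCS3} collapses to
\[
\langle [X_1,X_2,Y_1]_{\fL^2}, Y_2\rangle = \langle [Y_1,Y_2,X_1]_{\fL^2}, X_2\rangle,
\]
which is precisely the CS symmetry property \eqref{CSsym}, built into Def.~\ref{CSalg}. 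Finally, the derivation property \eqref{adjointFI2} for $ad^2_{(X_1,X_2)}$ acting on $\fL^1$ is an explicit hypothesis of the Corollary (and extends by linearity from pairs $(X_1,X_2)\in V\otimes V$ to arbitrary $\mathscr{Y}\in\otimes^{m-1}\fL^2$). Hence Prop.~\ref{n-m-prop} applies: the bracket defined by \eqref{def-brack-new}---which is exactly \eqref{n-m-prop-eq} with $m=3$, $Y_1\equiv X_n$, $Y_2\equiv Y$---satisfies the Filippov identity and is metric with respect to $\langle\,,\,\rangle$.

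For the skewsymmetry in the first $n-1$ arguments, I would note that by hypothesis the $\fL^1$-bracket is skewsymmetric in its first $n-1$ slots, so the endomorphism $ad^1_{(X_1,\dots,X_{n-1})}=[X_1,\dots,X_{n-1},\cdot\,]_{\fL^1}$ is totally antisymmetric under permutations of $X_1,\dots,X_{n-1}$. This antisymmetry is inherited by $\mathrm{Tr}(ad^1_{(X_1,\dots,X_{n-1})}ad^2_{(X_n,Y)})$, and then by non-degeneracy of $\langle\,,\,\rangle$ it passes to $[X_1,\dots,X_{n-1},X_n]$ itself. There is essentially no obstacle beyond these bookkeeping verifications; all the substantive work has been absorbed into the proof of Prop.~\ref{n-m-prop}, and the only mildly delicate observation is that \eqref{defhCS3} degenerates for $m=3$ to exactly the CS symmetry property that defines $\fL^2$.
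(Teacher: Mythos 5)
Your proposal is correct and follows exactly the route the paper intends: the Corollary is stated without a separate proof precisely because it is the $m=3$ specialization of Prop.~\ref{n-m-prop}, with \eqref{defhCS3} collapsing to the CS symmetry property \eqref{CSsym} and the derivation condition \eqref{adjointFI2} taken as a hypothesis. Your additional remark deriving the skewsymmetry of the new bracket in its first $n-1$ arguments from the skewsymmetry of $ad^1_{(X_1,\dots,X_{n-1})}$ and the non-degeneracy of $\langle\,,\,\rangle$ is the correct (and only) extra step needed.
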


For $n=3$, the metric 3-Filippov algebras obtained from
Cor.~\ref{corcinco} and \ref{corseis} are both
CS 3-algebras.

\begin{example}
\label{excorseis}
Consider first a metric $m$-Leibniz algebra $\fL^2$ defined
on the ($n+1$)- dimensional space of $\fL^1=\fG^1 = A_{n+1}$.
It follows that the adjoint action of $\fL^2$ is a derivation of $A_{n+1}$
{\it i.e.}, the second equation in \eqref{FI-coord} for
$f_{a_1\cdots a_n}{}^l=\epsilon_{a_1\cdots a_n}{}^l$,
\begin{equation}
\label{FI-coord2}
  \epsilon_{a_1\cdots a_n}{}^l h_{b_1\cdots b_{m-1}l}{}^s = \sum^n_{r=1}
h_{b_1\cdots b_{m-1}a_r}{}^l \epsilon_{a_1\dots a_{r-1} l a_{r+1} \cdots a_n}{}^r \; ,
\end{equation}
holds. To see it, consider the Schouten-type identity
\begin{equation}
\label{schout}
    h_{b_1\cdots b_{m-1}}{}^l{}_{[s} \epsilon_{a_1\cdots a_nl]} \equiv  0
    \quad,\quad a,b,l,s=1\dots,n+1   \; .
\end{equation}
This reduces to the sum of the $n+2$ cyclic permutations
\begin{equation}
\label{schout2}
    h_{b_1\cdots b_{m-1}}{}^l{}_{s} \epsilon_{a_1\cdots a_nl} = \sum^n_{r=1}
    h_{b_1\cdots b_{m-1}}{}^l{}_{a_r} \epsilon_{a_1\cdots a_{r-1} s a_{r+1}\cdots a_nl}
    +h_{b_1\cdots b_{m-1}}{}^l{}_{l} \epsilon_{a_1\cdots a_ns}\ .
\end{equation}
Since the last term vanishes by the complete antisymmetry of the $h$'s
with all indices down (recall that $\fL^2$ is metric),
what remains reproduces \eqref{FI-coord2}.

Now, let $\fL^2$ be a metric CS-Lie algebra. Then, all the conditions
of Cor.~\ref{corseis} are met and eq.~\eqref{def-brack-new} defines
an $n$-Leibniz algebra $\fL$ which is skewsymmetric in its
first $n-1$ arguments.
\end{example}

\section{Higher order Lie $k$-ple systems}
\label{kple}

\subsection{Lie $n$-ple systems: a first generalization}

There is a higher-order generalization of the Lie triple system
that is very close to Def.~\ref{triplesystems}. For it, it is sufficient to
look at the symmetry pattern of an $n$-Leibniz
algebra with skewsymmetric fundamental objects.
The symmetry pattern decomposition is determined by
\begin{equation}
\label{Lie n-ple}
\raisebox{-0.6cm}{\begin{tikzpicture}[scale=0.3]
  \draw[snake=brace] (-0.2,-1) -- (-0.2,3.5);
  \draw (0,-1) rectangle (1,3.5);
  \draw (0,2.5) -- (1,2.5);
  \draw (0,0) -- (1,0);
  \node at (-1.3,1.2) {\mbox{\scriptsize $n$-$1$}};
  \node at (0.5,1.5) {$\cdot$};
  \node at (0.5,2) {$\cdot$};
  \node at (0.5,1) {$\cdot$};
  \node at (0.5,0.5) {$\cdot$};
\end{tikzpicture}} \otimes
\raisebox{-0.05cm}{\begin{tikzpicture}[scale=0.3]
\draw (0,-1) rectangle (1,0);
\end{tikzpicture}}
=
\raisebox{-0.75cm}{\begin{tikzpicture}[scale=0.3]
  \draw[snake=brace] (-0.2,-1) -- (-0.2,4.5);
  \draw (0,-1) rectangle (1,4.5);
  \draw (0,3.5) -- (1,3.5);
  \draw (0,0) -- (1,0);
  \node at (-1,1.8) {\mbox{\scriptsize $n$}};
  \node at (0.5,2.5) {$\cdot$};
  \node at (0.5,2) {$\cdot$};
  \node at (0.5,1.5) {$\cdot$};
  \node at (0.5,2) {$\cdot$};
  \node at (0.5,1) {$\cdot$};
  \node at (0.5,0.5) {$\cdot$};
\end{tikzpicture}}
\ \oplus
\raisebox{-0.55cm}{\begin{tikzpicture}[scale=0.3]
  \draw[snake=brace] (-0.2,-1) -- (-0.2,3.5);
  \draw (1,2.5) rectangle (2,3.5);
  \draw (0,-1) rectangle (1,3.5);
  \draw (0,2.5) -- (1,2.5);
  \draw (0,0) -- (1,0);
  \node at (-1.3,1.2) {\mbox{\scriptsize $n$-$1$}};
  \node at (0.5,1.5) {$\cdot$};
  \node at (0.5,2) {$\cdot$};
  \node at (0.5,1) {$\cdot$};
  \node at (0.5,0.5) {$\cdot$};
\end{tikzpicture}}
\end{equation}
Clearly, the mixed symmetry pattern in the $r.h.s.$ suggests
\begin{definition}
\label{def Lie n-ple}
A {\it Lie $n$-ple system} is given by an $n$-Leibniz algebra
such that its bracket is\\
\noindent
a) skewsymmetric in its first $n-1$ arguments and \\
\noindent
b) satisfies the cyclic property,
\begin{equation}
\label{n-cyclic}
\sum_{cyclic} [X_1,X_2,\dots,X_n]=0 \; .
\end{equation}
\end{definition}
\noindent

\begin{example}
\label{n-ple-facil}
Let  $\fL^1=\fG^1=A_8$ and $\fL^2=\fG^2=A_4\oplus A_4$.
Let the basis of the common vector space $V= \mathbb{R}^8$
be $\{ \mathbf{e}_c \}$, $c,d=1,\dots,8$.
By Ex.~\ref{excorseis}, $\fG^2$ is
a derivation of $\fG^1$. Let the indices of $\fG^2$ be denoted
$a$, $b$ when they refer, respectively, to the
first and second ideals $A_4$; we may set {\it e.g.}
$a=1,\dots,4$ and $b=5,\dots,8$. Then, the structure
constants of $\fG^2$ satisfy $f_{abcd}=0\,\forall c, d$.
Let $\epsilon$ be the $\mathbb{R}^8$ Levi-Civita tensor,
and $\bar{\epsilon}$ that on the $A_4$ ideals.
The structure constants of the $7$-Leibniz algebra $\fL$
constructed as in Cor.~\ref{corseis} are given by
(see eq.~\eqref{gf2})
\begin{equation}
\label{gf3}
  g_{c_1\cdots c_6c_7c_8} = \epsilon_{c_1\cdots c_6}{}^{d_1}{}_{d_2}
  f_{c_7c_8}{}^{d_2}{}_{d_1} \; ;
\end{equation}
they are antisymmetric in $c_1\dots c_6$ and $c_7 c_8$ separately.
Since $c_7$ and $c_8$ cannot take values in different
ideals without $f_{c_7c_8}{}^{d_2}{}_{d_1}$ being zero, we are
left with two possibilities $g_{c_1\cdots c_6a_1a_2}$ and
$g_{c_1\cdots c_6b_1b_2}$. Let us consider the first case
so that $f$ has only indices in the first $A_4$,
\begin{equation}
\label{gf4}
  g_{c_1\cdots c_6a_1a_2} = - \epsilon_{c_1\cdots c_6a_3a_4}
  f_{a_1a_2}{}^{a_3a_4} \; .
\end{equation}
It is clear that among the $c_1\dots c_6$ of the above expression
there must be four $b$'s and two $a$'2 in different orders,
so $g_{c_1\cdots c_6a_1a_2}$ may be written as
\begin{eqnarray}
\label{gf5}
  g_{c_1\cdots c_6a_1a_2} &\sim&
  \delta^{b_1\cdots b_4 a_5a_6}_{c_1\cdots \cdots c_6}
  \bar{\epsilon}_{b_1\cdots b_4}  \bar{\epsilon}_{a_5a_6a_3a_4}
  \bar{\epsilon}_{a_1a_2}{}^{a_4a_3}\nonumber\\
  &\sim & \delta^{b_1\cdots b_4 a_5a_6}_{c_1\cdots \cdots c_6}
  \bar{\epsilon}_{b_1\cdots b_4} \delta_{a_5a_1}\delta_{a_6a_2} \; ;
\end{eqnarray}
there is a similar expression for $g_{c_1\cdots c_6b_1b_2}$,
obtained by substituting $b$'s for  $a$'s.

We may check that the $7$-Leibniz algebra $\fL$
determined by eq.~\eqref{gf3} satisfies condition
\eqref{n-cyclic}, which is equivalent to requiring that
the full antisymmetrization of its $7$-bracket vanishes.
Actually, this is always the case for $n$-Leibniz algebras
obtained as in Ex.~\ref{excorseis} when $\fL^2=\fG^2$ is a
metric 3-Filippov algebra. Indeed, the structure constants of
the $n$-Leibniz algebra $\fL$ are given by
\begin{equation}
\label{ex9a}
    g_{a_1\cdots a_{n-1}a_na_{n+1}}=\epsilon_{a_1\cdots a_{n-1}uv}
    f_{a_na_{n+1}}{}^{vu}\ ,
\end{equation}
where the $f$'s are the structure constants of the 3-Lie algebra $\fG^2$. The full
antisymmetrization of the $n$ entries of the bracket of $\fL$ corresponds to
\begin{equation}\label{ex9b}
  g_{[a_1\cdots a_{n-1}a_n]a_{n+1}}=\epsilon_{uv[a_1\cdots a_{n-1}}
  f_{a_n]a_{n+1}}{}^{vu}\ .
\end{equation}
But this is zero, as can be seen by dualizing the $r.h.s.$ of
\eqref{ex9b}, which gives
\begin{equation}
\label{ex9c}
    \epsilon_{uv a_1\cdots a_{n-1} } f_{a_na_{n+1}}{}^{vu}
    \epsilon^{a_1\cdots a_nb}= (n-1)! \delta^{a_n b}_{uv}
  f_{a_na_{n+1}}{}^{vu}= 0\ ,
\end{equation}
the last equality being a consequence of the complete antisymmetry
of the structure constants of $\fG^2$.

This last fact allows us to construct other examples of Lie
$n$-ple systems based on a simple $n$-Lie algebra $\fG^1=A_{n+1}$ and
a metric 3-Lie algebra $\fG^2$.

%

\end{example}

\subsection{Lie $\ell$-ple systems, $\ell=2n-3$}
\label{L-l-ple}

This second generalization uses $\ell$-Leibniz algebras with
and $\ell$-bracket antisymmetric in its
first ($n-1$) and last ($n-2$) arguments (as in Cor.~\ref{corcinco}).
To introduce the $\ell$-ple Lie systems we have to look
for the property that replaces (b) in Def.~\ref{triplesystems}
when $\ell$=($2n-3$) or, equivalently, for the symmetry pattern
of the generic $\ell$-Leibniz bracket that generalizes
$\Yboxdim{8pt}\yng(2,1)$ in eq.~\eqref{Leib-str} when $\ell>3$ and
reduces to it for $\ell=3$.

Let $\ell=(2n-3)>3$. An $\ell$-bracket skewsymmetric in its
first $n-1$ and last $n-2$ entries has the generic symmetry of
\raisebox{-0.5cm}{\begin{tikzpicture}[scale=0.3]
  \draw[snake=brace] (-0.2,-1) -- (-0.2,3.5);
  \draw (0,-1) rectangle (1,3.5);
  \draw (0,2.5) -- (1,2.5);
  \draw (0,0) -- (1,0);
  \node at (-1.3,1.2) {\mbox{\scriptsize $n$-$1$}};
  \node at (0.5,1.5) {$\cdot$};
  \node at (0.5,2) {$\cdot$};
  \node at (0.5,1) {$\cdot$};
  \node at (0.5,0.5) {$\cdot$};
\end{tikzpicture}}
$\otimes$
\raisebox{-0.4cm}{\begin{tikzpicture}[scale=0.3]
  \draw (1,0) rectangle (2,3.5);
  \draw (1,2.5) -- (2,2.5);
  \draw (1,2.5) -- (1,2.5);
  \draw (1,1) -- (2,1);
  \draw[snake=brace] (2.2,3.5) -- (2.2,0);
  \node at (1.5,1.5) {$\cdot$};
  \node at (1.5,2) {$\cdot$};
  \node at (3.3,1.8) {\mbox{\scriptsize $n$-$2$}};
\end{tikzpicture}}.
Its decomposition in terms of irreducible Young patterns is
given by
\begin{equation}
\label{decomp}
   \raisebox{-0.5cm}{\begin{tikzpicture}[scale=0.3]
  \draw[snake=brace] (-0.2,-1) -- (-0.2,3.5);
  \draw (0,-1) rectangle (1,3.5);
  \draw (0,2.5) -- (1,2.5);
  \draw (0,0) -- (1,0);
  \node at (-1.3,1.2) {\mbox{\scriptsize $n$-$1$}};
  \node at (0.5,1.5) {$\cdot$};
  \node at (0.5,2) {$\cdot$};
  \node at (0.5,1) {$\cdot$};
  \node at (0.5,0.5) {$\cdot$};
\end{tikzpicture}}
\otimes
\raisebox{-0.4cm}{\begin{tikzpicture}[scale=0.3]
  \draw (1,0) rectangle (2,3.5);
  \draw (1,2.5) -- (2,2.5);
  \draw (1,2.5) -- (1,2.5);
  \draw (1,1) -- (2,1);
  \draw[snake=brace] (2.2,3.5) -- (2.2,0);
  \node at (1.5,1.5) {$\cdot$};
  \node at (1.5,2) {$\cdot$};
  \node at (3.3,1.8) {\mbox{\scriptsize $n$-$2$}};
\end{tikzpicture}}
= \bigoplus^{n-2}_{r=0}\,
\raisebox{-0.8cm}{\begin{tikzpicture}[scale=0.3]
  \draw[snake=brace] (-0.2,-2.2) -- (-0.2,3.5);
  \draw (0,-2.2) rectangle (1,3.5);
  \draw (1,0) rectangle (2,3.5);
  \draw (0,2.5) -- (2,2.5);
  \draw (1,1) -- (2,1);
  \draw (0,-1.2) -- (1,-1.2);
  \draw[snake=brace] (2.2,3.5) -- (2.2,0);
  \node at (-1.4,0.8) {\mbox{\scriptsize $\ell$ -$r$}};
  \node at (0.5,0.5) {$\cdot$};
  \node at (0.5,1) {$\cdot$};
  \node at (0.5,1.5) {$\cdot$};
  \node at (0.5,0) {$\cdot$};
  \node at (0.5,-0.5) {$\cdot$};
  \node at (0.5,2) {$\cdot$};
  \node at (1.5,1.5) {$\cdot$};
  \node at (1.5,2) {$\cdot$};
  \node at (2.8,1.7) {\mbox{\scriptsize$r$}};
\end{tikzpicture}}\; ,
\end{equation}
where the dimensions of those at the $r.h.s.$ are
\begin{equation}
     \left(\begin{array}{c} \mathrm{dim}\fG+1\\ r \end{array}\right)
     \left(\begin{array}{c} \mathrm{dim}\fG\\ \ell-r \end{array}\right)
 \frac{\ell-2r+1}{\ell-r+1}\;  ;
\end{equation}
note that in all terms above $\ell-r\leq \mathrm{dim}\fG$. Since the
first (longest) column of the Young patterns above may have
dim$\fG$ boxes at the most, $2n-3-r\leq \mathrm{dim}\fG$ or
$\ell-r\leq \mathrm{dim}\fG$ .
The above decomposition of the outer product at the $l.h.s.$ of
\eqref{decomp} in representations of the $S_{2n-3}$ symmetric group
determines the possible `elementary' or
($GL(\mathrm{dim}\fG)$)-irreducible symmetry patterns of the
$\ell$-bracket. In particular, the $r=0$ component above would
correspond to a fully skewsymmetric ($2n-3$)-bracket and hence to
a ($2n-3$)-Lie algebra. We now argue that in this context
the adequate generalization of the Lie triple system
requires that the bracket of the Lie $\ell$-ple system
has the symmetry of the $r=n-2$ Young pattern in the sum \eqref{decomp}
{\it i.e.}, that the bracket has the symmetry
determined by
\raisebox{-0.6cm}{\begin{tikzpicture}[scale=0.3]
  \draw[snake=brace] (-0.2,-1) -- (-0.2,3.5);
  \draw (0,-1) rectangle (1,3.5);
  \draw (1,0) rectangle (2,3.5);
  \draw (0,2.5) -- (2,2.5);
  \draw (0,1) -- (2,1);
  \draw (0,0) -- (1,0);
  \draw[snake=brace] (2.2,3.5) -- (2.2,0);
  \node at (-1.3,1.2) {\mbox{\scriptsize $n$-$1$}};
  \node at (0.5,1.5) {$\cdot$};
  \node at (0.5,2) {$\cdot$};
  \node at (1.5,1.5) {$\cdot$};
  \node at (1.5,2) {$\cdot$};
  \node at (3.3,1.8) {\mbox{\scriptsize $n$-$2$}};
\end{tikzpicture}},
which indeed reduces to $\Yboxdim{8pt}\yng(2,1)$ for $n=3$.

To this aim, let us go back to the metric $\ell$-Leibniz algebra in Ex.~\ref{ht} as it
follows from Cor.~\ref{corcinco}. Since $\textrm{dim}\fG=n+1$, there is a restriction
since $\ell-r\leq n+1$. Thus, $r\geq n-4$ and, therefore, $n-4\leq r \leq n-2$.
Consider now the $(2n-3)$-commutators as defined by \eqref{2n-3-brack}
\begin{equation}
\label{l-ple}
[\mathbf{e}_{a_1},\dots,\mathbf{e}_{a_{n-1}},\mathbf{e}_{b_1},\dots,\mathbf{e}_{b_{n-2}}] =
-\sum_{\sigma\in
    S_{n-1}} \delta_{a_{\sigma(1)b_1}} \dots \delta_{a_{\sigma(n-2)} b_{n-2}}
    \delta_{a_{\sigma(n-1)}}{}^c \mathbf{e}_c  \; .
\end{equation}
To see how \eqref{l-ple} selects a specific symmetry among
the
irreducible components in the $r.h.s.$ of \eqref{decomp}, let
us look at the symmetry of a generic pattern. The $2n-3$ indices
of the Young tableau are split into two sets with
$2n-3-r$ and $r$ indices respectively.
The primitive projector associated to the Young tableau
symmetrizes $r$ pairs of indices, where each pair
contains an index of the first set and another of the second
one and, then, it antisymmetrizes the indices of both sets separately.
This projector, applied to the $r.h.s.$ of
eq.~\eqref{l-ple}, gives zero due to the $\delta_{ab}$ factors
unless the indices of the first set are the ($n-1$) $a$'s
all placed in the first column of the Young tableau
(and thus the indices of the second set are
the $n-2$ $b$'s in the second column),
since otherwise there will be a $\delta$ with
antisymmetrized indices. Thus, $\ell-r=n-1$, $r=n-2$,
select in \eqref{decomp} the pattern that
corresponds to the $\ell$-bracket \eqref{l-ple}.


This motivates our second generalization of Lie triple systems:
\begin{definition}
A {\it Lie $\ell$-ple system}, $\ell=(2n-3)$, is a real
vector space $\mathfrak{L}$ endowed with a bracket given by a $\ell$-linear map
$\mathfrak{L}\times \mathop{\dots}\limits^{2n-3} \times
\mathfrak{L}\rightarrow \mathfrak{L}$, $(X_1,\dots , X_{2n-3})
\mapsto [X_1,\dots , X_{2n-3}]$ such that
\begin{enumerate}
    \item  it is antisymmetric in the first $n-1$ and
    in the last $n-2$ indices;
    \item  it satisfies the (left) FI;
    \item  its overall symmetry structure is given by the Young pattern\\
  \raisebox{-0.6cm}{\begin{tikzpicture}[scale=0.3]
  \draw[snake=brace] (-0.2,-1) -- (-0.2,3.5);
  \draw (0,-1) rectangle (1,3.5);
  \draw (1,0) rectangle (2,3.5);
  \draw (0,2.5) -- (2,2.5);
  \draw (0,1) -- (2,1);
  \draw (0,0) -- (1,0);
  \draw[snake=brace] (2.2,3.5) -- (2.2,0);
  \node at (-1.3,1.2) {\mbox{\scriptsize $n$-$1$}};
  \node at (0.5,1.5) {$\cdot$};
  \node at (0.5,2) {$\cdot$};
  \node at (1.5,1.5) {$\cdot$};
  \node at (1.5,2) {$\cdot$};
  \node at (3.3,1.8) {\mbox{\scriptsize $n$-$2$}};
\end{tikzpicture}}
\end{enumerate}
\end{definition}
\noindent
Properties (a) and (b) above define a particular
(left) $\ell$-Leibniz algebra structure $\fL$, $\ell$ odd;
(c) makes of $\fL$ an $\ell$-ple system.
Note that, strictly speaking, (a) above is included in (c)
and that, due to the properties of the projectors that determine the
($GL(\mathrm{dim}\fL)$-) irreducible symmetries associated with the
different patterns, (c) automatically implies that the symmetrizations
and subsequent antisymmetrizations implied by any of the other
$(r\neq n-2)$ Young patterns in the $r.h.s.$ of \eqref{decomp} give zero
necessarily. When $n=3=l$, the resulting Lie triple system is
the standard one (Def.~\ref{triplesystems}).\\

\section{Concluding remarks  }
\label{f.r.}

In this paper we have introduced two Lie $\ell$-ple
generalizations of the Lie 3-ple, or triple, systems; they appear as special
cases of $k$-Leibniz algebras, themselves a generalization of
$k$-Lie algebras. As mentioned in the Introduction, 3-Lie algebras
underlie the BLG model; they are also behind the Basu-Harvey (BH)
\cite{Bas-Har:05} equation, which is naturally recovered as a
BPS condition of the BLG theory (the BH equation was originally
given in terms of a GLA four-bracket with a fixed entry;
see \cite{review} for $n$-Lie algebras given in terms of
($n+1$)-multibrackets of GLAs defined by the fully antisymmetrized
{\it associative} products of its entries). It is natural to think of physical
applications for larger $k>3$ algebras. In fact, higher order FAs appeared in
suspersymmetric physics before the advent of the BLG model:
their FIs may be thought of as generalized Pl\"ucker
relations \cite{FOF-Pap:02}, and these arise naturally
in the classification of maximally supersymmetric solutions of
supergravity theories. Thus, from this point of view, 4-Lie algebras
are relevant \cite{FOF-Pap:02} for the maximally supersymmetric
backgrounds in  IIB supergravity.

 Let us go back to the BH equation for M2 branes ending on a M5
brane. This relation may be considered as a generalization of the Nahm
equation \cite{Nahm:80} for D1 branes ending on a D3 brane, which
involves an ordinary Lie bracket. The Nahm and the BH BPS
equations have, respectively, the schematic form
$\dot{X}(s) \sim [X,X]\,,\, \dot{X}(s)\sim [X,X,X]$,
where $s$ is some direction in the D1 or M2 branes along which they
extend apart from the D3 and the M5 ones ({\it e.g.}, the D3 brane is
located at $s$=0). The structure of the Nahm and BH equations immediately
suggest moving to a general $n$-Lie bracket to write \cite{Bo-Ta-Za:08}
$\frac{dX}{ds}\sim [X,\mathop{\dots} \limits^n ,X]$;
in fact, all these expressions have the appearance of
Maurer-Cartan equations for FAs (see \cite{review}).

  To see the effect of a possible $n$-Lie generalization, let us
first recall how the $D$=11 M2-M5 system, with coordinates
\begin{equation}
\label{M2M5}
\begin{array}{cccccccc}
  M2: & 0 & 1 & 2 &   &  &   &   \\
  M5: & 0 & 1 &   & 3 & 4& 5 &  6 \quad ,
\end{array}
\end{equation}
is described. From the M2 worldvolume point of view, the M5 brane is given
by four transverse 3-Lie algebra-valued functions
$X^{\mathcal{I}}(s),\ \mathcal{I}=3,4,5,6$,
where $s$ corresponds to the spatial M2 worldvolume coordinate transverse
to the M5 brane (the second one), and which obey the BH equation
for a 3-Lie bracket. From the (dual) point of view of the M5 brane, the
coordinate $s$ becomes a field depending on the transverse $X^{\mathcal{I}}$
coordinates of the M5 brane, $s=s(X^3,X^4,X^5,X^6)$. In the general
case, $\frac{dX}{ds}\sim [X,\mathop{\dots} \limits^n ,X]$,
we may think of a generic solution with the behaviour
$X(s)\sim \frac{1}{s^{\frac{1}{n-1}}}$, $s\sim \frac{1}{X^{n-1}}$,
where the exponent is determined by the number $n$ of entries
of the $n$-bracket. We would expect $s$, as field, to be a harmonic
function in $d$-dimensional transverse space,
for which we would need $n=d-1$ since a harmonic function in
$d$-dimensions depends on the radius as $1/R^{d-2}$.
Thus, the Nahm  (BH) equations correspond
to $n=2\ (3)$ since in the D1-D3 (M2-M5) systems the D1 (M2) branes
appear, from the point of view of the D3 (M5) ones, as a scalar
field with the behaviour $s\sim \frac{1}{R}$ ($s\sim \frac{1}{R^2}$).
Thus we may speculate, for {\it e.g.}  $D$=10, whether
other (supersymmetric) brane systems determined by suitable
`brane-boundary rules' \cite{Stro:95, Town:96} could be described
by a generalized BH equation involving other $n$-Lie algebra
brackets (we thank Neil Lambert on this point).

   Further, there is also the question of moving from $n$-Lie
to the more general $n$-Leibniz algebras with non fully
anticommuting brackets; in particular, $n$-Leibniz algebras which
retain fully skewsymetric fundamental objects appear often as an important
subclass (see \cite{review,deAz-JMI:10}), as we have also seen in this paper. It turns
out here that there is also a BPS relation \cite{Pal-Sa:11} that is the
BH-like equation for a $(\mathcal{N}=2)$-supersymmetric BLG-type model
\cite{Cher-Sa:08}, which uses CS algebras rather than $3$-$Lie$ ones.
Thus, all the above considerations provide a motivation for
considering, setting aside their mathematical interest, the various
higher order $k$-Leibniz and, in particular, Lie $k$-ple algebras
introduced here, and raises the issue of their
possible applications.

\vskip 1cm
\noindent
{\bf Acknowledgements}.
The authors wish to thank Neil Lambert for a helpful conversation.
This work has been partially supported by research grants from the
Spanish MINECO (FIS2008-01980, FIS2009-09002, CONSOLIDER
CPAN-CSD2007-00042).


\begin{thebibliography}{99}


\bibitem{Nambu:73}
Y.~Nambu,
{\it Generalized Hamiltonian dynamics},
Phys. Rev. {\bf D7}, 2405-2414 (1973).

\bibitem{AzPePB:96a}
J.~A. de~Azc\'arraga, A.~M. Perelomov, and J.~C. P\'erez~Bueno,
{\it New Generalized Poisson Structures}, J. Phys. {\bf A29},
L151-L157 (1996), [arXiv:q-alg/9601007]; {\it The
Schouten-Nijenhuis bracket, cohomology and generalized Poisson structures},''
J. Phys. {\bf A29}, 7993-8010 (1996)  [arXiv:hep-th/9605067].

\bibitem{AzBu:96}
J.~A. de~Azc\'arraga and J.~C. P\'erez-Bueno,
{\it Higher-order simple Lie algebras},
Commun. Math. Phys. {\bf 184}, 669-681 (1997),
[arXiv:hep-th/9605213].

\bibitem{Han-Wac:95}
P. Hanlon and H. Wachs, {\it On Lie $k$-algebras}, Adv. in Math.
{\bf 113}, 206-236 (1995).

\bibitem{Gne:95}
V. Gnedbaye, {\it Les alg\`ebres $k$-aires et leurs op\'erads},
C. R. Acad. Sci. Paris, S\'erie I, {\bf 321} (1995) 147-152.

\bibitem{Vin2:98}
A.~M. Vinogradov and M.~M. Vinogradov, {\it On multiple generalizations of {L}ie
  algebras and Poisson manifolds},  Contemp. Math. {\bf 219},
  273-287 (1998)

\bibitem{Filippov}
V.~Filippov, {\it $n$-{L}ie algebras},
Sibirsk. Mat. Zh. {\bf 26} (1985), no.~6, 126-140, (1985)
[Engl. trans.: Siberian Math. J. \textbf{26}, no.~6, 879-891 (1985)].

\bibitem{Kas:87}
S.~M. Kasymov, {\it Theory of $n$-lie algebras},
Algebra i Logika {\bf 26}, no.~3, 277-297 (1987)
[Engl. trans.: Algebra and Logic, {\bf 26}, 155-166 (1988)].

\bibitem{Ling:93}
W.~X. Ling, {\it On the structure of $n$-Lie algebras}.
PhD thesis, Siegen, 1993.

\bibitem{Cas-Lod-Pir:02}
J.~Casas, J.-L. Loday, and T.~Pirashvili,
{\it Leibniz $n$-algebras},
Forum Math. {\bf 14}, 189-207 (2002);\\
J.-L. Loday,
{\it Une version non-commutative des alg\`ebres de Lie},
L'Ens. Math. {\bf 39}, 269-293 (1993).

\bibitem{Tak:93}
L.~Takhtajan, {\it On Foundation of the generalized Nambu mechanics},
Commun. Math. Phys. {\bf 160}, 295-316 (1994)
[arXiv:hep-th/9301111].

\bibitem{Sah-Val:92}
D.~Sahoo and M.~C. Valsakumar,
{\it Nambu mechanics and its quantization},
Phys. Rev. {\bf A46}, 4410-4412 (1992).

\bibitem{AIP-B:97}
J.~A. de~Azc\'arraga, J.~M. Izquierdo, and J.~C. P\'erez~Bueno,
{\it On the higher-order generalizations of Poisson structures},
J. Phys. {\bf A30}, L607-L616 (1997)
[arXiv:hep-th/9703019].

\bibitem{Cu-Za:02}
T.~Curtright and C.~K. Zachos,
{\it Classical and quantum Nambu mechanics},
Phys. Rev. {\bf D68}, 085001 (2003)
[arXiv:hep-th/0212267].

\bibitem{review}
J.~A.~de Azc\'arraga and J.~M.~Izquierdo,
{\it n-ary algebras: a review with applications},
J. Phys. {\bf A43} (2010) 293001-1-117
[arXiv:1005.1028 [math-ph]];
{\it Topics on n-ary algebras},
J. Phys. Conf. Ser. {\bf 284}, 012019 (2011)
[arXiv:1102.4194 [math-ph]].

\bibitem{Ba-La:06}
J.~Bagger and N.~Lambert, {\it Modeling multiple M2's},
Phys. Rev. {\bf D75} 045020  (2007),
[arXiv:hep-th/0611108];
{\it Gauge symmetry and supersymmetry of multiple M2-branes},
Phys. Rev. {\bf D77} (2008) 065008
[arXiv:0711.0955 [hep-th]].

\bibitem{Gust:08a}
A.~Gustavsson, {\it Selfdual strings and loop space Nahm
equations}, {\em JHEP} {\bf 04},  083 (2008),
[arXiv:0802.3456 [hep-th]].

\bibitem{Aha-Be-Ja-Mal:08}
O.~Aharony, O.~Bergman, D.~L. Jafferis, and J.~Maldacena, {\it N=6 superconformal
  Chern-Simons-matter theories, M2-branes and their gravity duals},
  JHEP {\bf 10}, 091 (2008)
  [arXiv:0806.1218 [hep-th]].

\bibitem{Ba-La-Mu-Pa:12}
J. Bagger, N. Lambert, S. Mukhi and C.~Papageorgakis,
{\it Multiple membranes in M-theory},
arXiv:1203.3546 [hep-th].

\bibitem{Pap:08}
G.~Papadopoulos, {\it M2-branes, 3-Lie Algebras and Pl\"ucker
relations}, JHEP {\bf 05},  054 (2008)
[arXiv:0804.2662 [hep-th]].

\bibitem{Ga-Gu:08}
J.~P. Gauntlett and J.~B. Gutowski,
{\it Constraining maximally supersymmetric membrane actions},
JHEP {\bf 0806}, 053 (2008)
[arXiv:0804.3078 [hep-th]].

\bibitem{Go-Mi-Ru:08}
J.~Gomis, G.~Milanesi, and J.~G. Russo,
{\it Bagger-Lambert theory for general  Lie algebras},
JHEP {\bf 06}, 075 (2008)
[arXiv:0805.1012 [hep-th]].

\bibitem{Cher-Sa:08}
S.~Cherkis and C.~S\"amann,
{\it Multiple M2-branes and generalized 3-Lie algebras}
Phys. Rev. {\bf D78} 066019 (2008),
{\tt [arXiv:0807.0808 [hep-th]]};\\
S.~Cherkis, V.~Dotsenko and C.~S\"amann,
{\it On superspace actions for multiple M2-branes, metric 3-algebras and their classification},
Phys.\ Rev.\ D {\bf 79}, 086002 (2009)
[arXiv:0812.3127 [hep-th]].

\bibitem{Ba-La:08}
J.~Bagger and N.~Lambert,
{\it Three-algebras and N=6 Chern-Simons gauge Theories}
Phys. Rev. {\bf D79}, 025002 (2009)
[arXiv:0807.0163 [hep-th]].

\bibitem{Jacob:49}
N.~Jacobson, {\it Lie and Jordan triple systems},
Amer. J. Math. {\bf 71}, 149-170 (1949);
{\it General representation theory of Jordan algebras},
Trans. Amer. Math. Soc. {\bf 70}, 509-530 (1951).

\bibitem{Lis:52}
W.~G. Lister,
{\it A structure theory of Lie triple systems},
Trans. Am. Math. Soc. {\bf 72}, 217-242 (1952).

\bibitem{Yama:57}
K.~Yamaguti,
{\it On algebras of totally geodesic spaces (Lie triple systems)},
J. Sci. Hiroshima Univ. Ser. A {\bf 21},  107-113, (1957);
{\it On the Lie triple system and its generalization}, {\it ibid}
{\bf 21}, 155-160 (1958).

\bibitem{Fa:73}
J. R. Faulkner, {\it On the geometry of inner ideals},
J. of Algebra {\bf 26}, 1-9 (1973).

\bibitem{Ber:00}
W.~Bertram,
{\it The geometry of Jordan and Lie structures},
Springer Lecture Notes in Mathematics {\bf 1754}, Berlin, 2000.

\bibitem{Bre-San:11}
M.~R.~Bremner and J.~S\'anchez-Ortega,
{\it Leibniz triple systems},
arXiv:1106.5033 [math.RA].

\bibitem{Gau:96}
P.~Gautheron,
{\it Some remarks concerning Nambu mechanics},
Lett. Math. Phys. {\bf 37}, 103-116 (1996).

\bibitem{Kas:95a}
S.~M. Kasymov,
{\it Analogs of the Cartan criteria for $n$-Lie algebras},
Algebra i Logika {\bf 34}, no.~3, 274-287 (1995)
[Engl. trans.: Algebra and Logic \textbf{34}, no.~3, 147-154 (1995)].

\bibitem{Da-Tak:97}
Y.~L. Daletskii and L.~Takhtajan,
{\it Leibniz and Lie algebra structures for Nambu algebra},
Lett. Math. Phys. {\bf 39}, 127-141 (1997).

\bibitem{Okub:94}
S.~Okubo,
{\it Parastatistics as Lie supertriple systems},
J. Math. Phys. {\bf 35}, 2785-2803 (1994)
[arXiv:hep-th/9312180].

\bibitem{Okubo:93}
S.~Okubo,
{\it Triple products and Yang-Baxter equation:
I, Octonionic and quaternionic triple systems;
II, Orthogonal and symplectic ternary systems},
J. Math. Phys. {\bf 34}, 3273-3291; {\it ibid.} 3292-3315 (1993)
[arXiv:hep-th/9212052].

\bibitem{Oku-Kam:96a}
S.~Okubo and N.~Kamiya,
{\it Quasi-classical Lie superalgebra and Lie-super triple systems}
[q-alg/9602037].

\bibitem{Ker:00}
R. Kerner,
{\it Ternary algebraic structures and their applications in physics},
math-ph/0011023, Proc. of the XXIII ICGTMP, Dubna (2000)

\bibitem{Oku-Kam:96b}
S.~Okubo and N.~Kamiya,
{\it Jordan-Lie superalgebras and Jordan-Lie triple systems}
J. Alg. {\bf 398}, 388-411 (1997), UR-1467.

\bibitem{Oku:03}
S.~Okubo,
{\it Construction of Lie superalgebras from triple product systems}
AIP Conf. Proc. {\bf 687}, 33-40 (2003)
[math-ph/0306029].

\bibitem{deMed-JMF-Men-Rit:08}
P.~de~Medeiros, J.~Figueroa-O'Farrill, E.~M\'endez-Escobar, and
P.~Ritter,
{\it On the Lie-algebraic origin of metric 3-algebras},
Commun. Math. Phys. {\bf 290} 871-902 (2009)
[arXiv:0809.1086 [hep-th]].

\bibitem{Palm:09}
J.~Palmkvist,
{\it Three-algebras, triple systems and 3-graded Lie superalgebras},
J. Phys. {\bf A43} (2010) 015205
[arXiv:0905.2468 [hep-th]].

\bibitem{Bas-Har:05}
A.~Basu and J.~A. Harvey, {\it The M2-M5 brane system and a generalized Nahm's
equation}, Nucl. Phys. {\bf B713}, 136-150 (2005)
[arXiv:hep-th/0412310].

\bibitem{FOF-Pap:02}
J.~M.~Figueroa-O'Farrill and G.~Papadopoulos,
{\it Pl\"ucker type relations for orthogonal planes},
J. Geom. and Phys. {\bf 49}, 294-331 (2004)
[math/0211170 [math-ag]].

\bibitem{Nahm:80}
W.~Nahm, {\it A Simple Formalism for the BPS Monopole},
Phys. Lett. {\bf B90}, 413-414  (1980)

\bibitem{Bo-Ta-Za:08}
G.~Bonelli, A.~Tanzini and M.~Zabzine,
{\it Topological branes, p-algebras and generalized Nahm equations},
Phys. Lett. {\bf B672}, 390-395 (2009)
[arXiv:0807.5113 [hep-th]].

\bibitem{Stro:95}
A.~Strominger,
{\it Open p-branes},
Phys. Lett. {\bf B383}, 44-47 (1996)
[hep-th/9512059].

\bibitem{Town:96}
P.~K.~Townsend,
{\it Brane surgery},
Nucl. Phys. Proc. Suppl. {\bf 58}, 163-175 (1997)
[hep-th/9609217].

\bibitem{deAz-JMI:10}
J.~A.~de Azc\'arraga and J.~M.~Izquierdo,
{\it On a class of n-Leibniz deformations of the simple Filippov algebras},
J.\ Math.\ Phys.\  {\bf 52}, 023521 (2011)
[arXiv:1009.2709 [math-ph]].

\bibitem{Pal-Sa:11}
S.~Palmer and C.~S\"amann,
{\it Constructing generalized self-dual strings},
JHEP {\bf 1110}, 008 (2011)
[arXiv:1105.3904 [hep-th]].

\end{thebibliography}
\end{document}